%% file: isit20_ncw.tex
\documentclass[conference,letterpaper]{IEEEtran}

\usepackage{geometry}
\usepackage[utf8]{inputenc} 
\usepackage[T1]{fontenc}
\usepackage{url}
\usepackage{ifthen}
\usepackage{cite}
\usepackage[cmex10]{amsmath} % Use the [cmex10] option to ensure complicance                           % with IEEE Xplore (see bare_conf.tex)
\usepackage{amssymb}
\usepackage{graphicx}
\usepackage{bm}
\usepackage{bbm}
\usepackage{color}
\usepackage{subfig}
\usepackage{mathtools}
\usepackage{algorithm}
\usepackage{algorithmic}
\usepackage{graphicx}
\usepackage{float}
\input{./theoremstylesheet.tex}

\input{./macro_ihwang.tex}

\input{./macro_comove.tex}

\begin{document}
%\title{On the Noisy Coin Weighing Problem} 
%\title{On the Noisy Combinatorial Quantitative Group Testing Problem}
%\title{On the Combinatorial Quantitative Group Testing Problem with Noisy Query and Partial Recovery}
\title{Positional Identifiability from Pairwise Collision Data}

\author{
\IEEEauthorblockN{Yun-Han Li, Ilan Shomorony and Olgica Milenkovic
}
\IEEEauthorblockA{
%\IEEEauthorrefmark{1}
\\Department of Electrical and Computer Engineering, University of Illinois Urbana-Champaign, USA\\ 
Email: \url{yunhanl2,ilans,milenkov@illinois.edu}
}
\IEEEauthorblockA{
%\IEEEauthorrefmark{1}
}
}
\maketitle

\begin{abstract}
We study the problem of recovering the relative positions of objects moving along the real line based only on pairwise collision data. While interaction-based sensing systems arise naturally 
in a variety of practical settings, a systematic theoretical 
understanding of positional identifiability from collision 
observations alone remains unexplored. Our contributions are three-fold. First, under the full 
observability model, in which both the set of collisions and their 
temporal ordering are known, we show that the relative 
positions of all objects can be uniquely recovered if and only 
if the collision history, represented as a graph, is connected. Second, we show that under partial 
observability, where only colliding pairs are observed without timing information, the problem is related to \emph{function graphs} and introduce a canonical 
layer decomposition in which each layer corresponds to a 
maximal clique; the contraction graph induced by this 
decomposition is an interval graph, and we provide efficient 
algorithms to recover it. Third, under incomplete observations where even some pairwise collision observations may be missing, we formulate the problem as a graph completion problem and establish its NP-hardness via a $4$-approximation relationship with the graph bandwidth problem.
\end{abstract}

\section{Introduction}\label{sec:intro}

\input{\srcpath introduction.tex}

%\section{Review of Deduplication Approaches}\label{sec:practice}
%\input{\srcpath sec_practice.tex}

\section{Problem Formulation}\label{sec:formulation}

\input{\srcpath sec_formulation.tex}

\section{Results}\label{sec:result}
\input{\srcpath sec_result.tex}

%\section{The Jump Metric: Results}\label{sec:jump}
%\input{\srcpath sec_jump.tex}

%\section{Related Works}\label{sec:related}
%\input{\srcpath sec_related.tex}

%\section{Future Work}\label{sec:future}
%\input{\srcpath sec_future.tex}

%\section{Appendix}\label{sec:appendix}
%\input{\srcpath appendix.tex}

%\section{Main Results}\label{sec:results}
%\input{\srcpath sec_results.tex}

%\section{Algorithms}\label{sec:algo}
%\input{\srcpath sec_results.tex}
%\input{\srcpath sec_algo.tex}

%\section{Concluding Remarks}\label{sec:conclusion}
%\input{\srcpath sec_conclusion.tex}

%\section*{Acknowledgments}
%\input{\srcpath acknowledgment.tex}

%This work was supported by the National Science Foundation (NSF) under grants CCF 2107344 and CCF 2046991.

\bibliographystyle{IEEEtran}
%\vfill
\bibliography{Ref.bib}

%\newpage
%\onecolumn

%\appendix
%\input{\srcpath sec_proofs.tex}

%%%%%%
%% To balance the columns at the last page of the paper use this
%% command:
%%
%\enlargethispage{-1.2cm} 
%%
%% If the balancing should occur in the middle of the references, use
%% the following trigger:
%%
%\IEEEtriggeratref{3}
%%
%% which triggers a \newpage (i.e., new column) just before the given
%% reference number. Note that you need to adapt this if you modify
%% the paper.  The "triggered" command can be changed if desired:
%%
%\IEEEtriggercmd{\enlargethispage{-20cm}}
%%
%%%%%%

%%%%%%
%% References:
%% We recommend the usage of BibTeX:
%%

\end{document}

%% file: theoremstylesheet.tex
\newtheorem{example}{Example}[section]
\newtheorem{lemma}{Lemma}[section]
\newtheorem{theorem}{Theorem}[section]
\newtheorem{proposition}{Proposition}[section]
\newtheorem{corollary}{Corollary}[section]
\newtheorem{definition}{Definition}[section]

%% file: macro_ihwang.tex
\newcommand{\lp}{\left(}
\newcommand{\rp}{\right)}

\newcommand{\argmax}{\mathop{\mathrm{arg\,max}}}

%% file: macro_comove.tex
\newcommand{\Max}[1]
{#1_{\max}}
\newcommand{\Min}[1]
{#1_{\min}}
\newcommand{\leqs}{\leq}
\newcommand{\geqs}{\geq}
\newcommand{\br}[3]{#1-#2\sim#3}
\newcommand{\bn}[1]{N^{+}\lp#1\rp}
\newcommand{\W}{W}

\newcommand{\mv}{v}
\newcommand{\mV}{V}

\newcommand{\e}{e}
\newcommand{\E}{E}
\newcommand{\h}{h}
\newcommand{\hg}{H}

\newcommand{\p}{p}

\newcommand{\pe}[1]{p\lp #1\rp}

%% file: introduction.tex
In many practical distributed sensing systems, global knowledge must be inferred
based on observed local interactions~\cite{eagle2006reality, lazer2009computational}.
Most often, the full history of the system evolution is not available, and one
only has access to information pertaining to ``local events'' such as object
collisions or interactions. A natural question in this setting is whether such
local interaction information allows one to recover an underlying position of
objects or the geometry of object movements and encounters.

Here, we consider a combinatorial model in which a collection of objects move
along the real line (or an interval) according to unknown continuous trajectories. Instead of
observing the exact positions of the objects, we only observe pairwise collisions on the line, with or without timestamps for the events. The question of interest is to determine to what extent can one reconstruct
the relative positions of objects from such collision information.

Interaction-based observations arise naturally in a variety of settings, including particle physics or sensing
systems with limited resolution~\cite{eagle2006reality}. In these scenarios,
precise positional information of the objects is not directly observable, and one
only has access to information about which pairs of objects interacted. As a result,
the position of the objects must be inferred indirectly.

The goal of this paper is to demonstrate that the problem can be cast in a combinatorial framework in which
objects are represented by continuous functions, and observations
correspond to pairwise intersections of the functions. The resulting formulation 
is closely related to \emph{function graphs}, a class of intersection graphs in
which vertices correspond to one-dimensional continuous functions and edges
indicate the existence of an intersection~\cite{ehrlich1976intersection,
scheinerman1984intersection}. More broadly, this places our problem within the
extensive literature on intersection graphs, including interval
graphs~\cite{fulkerson1965incidence, booth1976testing}, circular-arc graphs,
permutation graphs~\cite{pnueli1971transitive}, and string
graphs~\cite{kratochvil1994intersection, golumbic2004algorithmic, mckee1999topics}.
While these graph classes have been extensively studied from the perspectives of
recognition and structural characterization~\cite{golumbic2004algorithmic,
mckee1999topics}, the question of what positional information can be recovered from intersection data is new. In
particular, we investigate how different levels of observability, such as temporal information or the presence of missing interactions,
affect our ability to recover the relative positions of objects.

Formally, we consider $n$ objects $V = \{v_1, v_2, \ldots, v_n\}$ whose positions evolve 
according to unknown continuous trajectories $v_i(t) : \mathbb{R} \to \mathbb{R}$. A 
\textbf{collision} between objects $i$ and $j$ occurs at time $t$ if $v_i(t) = v_j(t)$. 
We study three levels of observability and the resulting types of collision data:

\begin{enumerate}
    \item \textbf{Full observability} (collision times are known): The observation is the 
    complete \emph{ordered (wrt time)} sequence of collisions $h = e_1, e_2, \ldots$, equivalently 
    represented as an edge-labeled multigraph $H = (V, E)$ where edge labels encode 
    temporal order. We show that the relative positions $p_k$ of all objects at all time can be uniquely recovered (up to reversal) if and only if $H$ is connected.

    \item \textbf{Partial observability} (no timing information): The collisions are described by a graph $G = (V, E)$, where $uv \in E$ indicates that $u$ 
    and $v$ had collided at \textit{some} time, without ordering information. This model 
    is closely related to \textit{function graphs}~\cite{ehrlich1976intersection,
    golumbic2004algorithmic,mckee1999topics}, a well-studied class of intersection 
    graphs. Exact recovery of positions is generally impossible in this setting. 
    Instead, we introduce a canonical layer decomposition 
    $(V_1, V_2, \ldots, V_k)$ -- a partition of $V$ where each layer $V_i$ is a 
    maximal clique of objects whose relative order cannot be distinguished from unordered 
    collisions alone. We show that the contraction graph induced by this decomposition 
    is an interval graph~\cite{fulkerson1965incidence,booth1976testing} and 
    provide efficient algorithms to recover it.

    \item \textbf{Incomplete observations} (missing collisions): The observation is a 
    subgraph $G = (V, E)$ of the true collision graph, where some collision edges may 
    be absent. We study the problem of completing $G$ into a valid function graph 
    $G' = (V, E')$ with $E \subseteq E'$ while preserving as much positional information 
    as possible. We show that a natural formulation of this completion problem is a $4$-approximation for the graph bandwidth problem~\cite{diaz2002survey,
    petit2013addenda}, implying NP-hardness~\cite{feige2009approximating} and connecting our setting to classical graph layout problems~\cite{diaz2002survey,petit2013addenda,
    dunagan2001euclidean,feige2009approximating}.
\end{enumerate}
Our work takes a step towards a systematic understanding of positional 
identifiability from pairwise interaction data, bridging ideas from intersection graph~\cite{ehrlich1976intersection,golumbic2004algorithmic,
mckee1999topics} and graph layout problems~\cite{diaz2002survey,petit2013addenda}.

%% file: sec_formulation.tex
We start by introducing our model movement constraints and observation assumptions.

\underline{Movement model.} There are $n$ objects, $\mV=\{\mv_{1},\mv_{2},\hdots,\mv_{n}\},$ whose positions change in time following trajectories described by continuous functions $\mv_{i}(t):\mathbb{R}\rightarrow \mathbb{R}$. The trajectories are not known \emph{a priori}. 

\underline{Observation model.} We say that a \emph{collision} $\e=\mv_{i}\mv_{j}$ of two objects $i$ and $j$, or correspondingly, their trajectories $\mv_{i},\mv_{j}$, occurs  at time $t$ if $\mv_{i}(t)=\mv_{j}(t)$ (i.e., the trajectories intersect). For simplicity, we assume that (1) collisions are instantaneous so that the trajectories intersect only at one time point $t$ instead on overlapping on a time interval $[t^{-},t^{+}]$; This is possible since we do not put any constraints on hidden trajectories except continuity.
%\textcolor{red}{I am not sure that this is possible unless the two objects move in opposite directions through $t$; as your model involves movement on a line, the functions you will have are going to be piecewise linear, right? because you can only change direction; something is not adding up here?}
%\textcolor{blue}{They might not be linear. For example, two cars driving on a highway toward the same direction doesn't imply they will always next to each other. Sometimes one is faster than another. Their position vs time might not be linear } %\textcolor{red}{disagree, unless you assume that drivers can change velocity continuously; at least say that we assume that the movement on the line leads to continuous position functions and mention piecewise liner later on when you talk about graph, and cite papers of prof. West. Also, at some point you want to mention community detection and co-movement communities, since the problem originated from this investigation - your communities are the "extremal movers"}
(2) there is at most one collision among trajectories for any $t$, which also implies that no three or more trajectories intersecting at the same time point;
(3) $v_i(t)$ and $v_j(t)$ always cross when a collision happens
(i.e., $\mv_{i}(t^{-})<\mv_{j}(t^{-})\Rightarrow \mv_{i}(t^{+})>\mv_{j}(t^{+})$). The \emph{history of collisions} is a sequence of ordered collision times $\h=\e_{1},\e_{2},\hdots$, where $\e_{i}$ is the $i$-th collision. Equivalently, the collision history can be characterized by the edge-labeled collision multigraph $\hg = (\mV,\E=\{\e_{i}\})$, where the edge labels indicate the collision observation orders.
\begin{example}
    The collision graph corresponding to the functions in Figure~\ref{shrink} (left) is $G=(\mV,\E)$ with $\mV=\{\mv_{i}\}_{i=1}^{4}$ and $\E=\{e_{1}=\mv_{3}\mv_{4},e_{2}=\mv_{2}\mv_{3},e_{3}=\mv_{1}\mv_{4},e_{4}=\mv_{1}\mv_{4},e_{5}=\mv_{2}\mv_{4},e_{6}=\mv_{3}\mv_{4}\}$.
\end{example}
\underline{Goal.} We say that $\p_{k}=\lp\mv_{i_{1}},\mv_{i_{2}},\mv_{i_{3}},\hdots\rp$ are the positions   
% \iscomment{ordering?} \textcolor{red}{If we use ordering, people might confuse it with ordering of collisions?} 
of vertices $\mV=\{\mv_{1},\mv_{2},\hdots,\mv_{n}\}$ between consecutive collisions $\e_{k}$ and $\e_{k+1}$ if $\mv_{i_{1}}(t)<\mv_{i_{2}}(t)<\mv_{i_{3}}(t)<\hdots$ for all $t$ between the $k$-th and $(k+1)$-th collision time. 
Moreover, the end-position $\pe{\hg}$ represents the underlying position after the last collision. Given a complete (partial) collision history, we aim to recover $\p_k$ 
for all $k$. For instance, in Figure~\ref{shrink}, $\p_0 = (\mv_1, \mv_3, \mv_5, \mv_2, \mv_4)$ 
and $\p_1 = (\mv_1, \mv_5, \mv_3, \mv_2, \mv_4)$, where $\p_1$ is obtained from 
$\p_0$ by swapping $\mv_3$ and $\mv_5$ after collision $e_1$.
Note that replacing $\{\mv_{i}(t)\}$ with $\{-\mv_{i}(t)\}$ would result in the same history, but with all $\p_{k}$s in inverse order. Therefore, recovered positions are only unique up to inversion.

In Section~\ref{sec_ch}, 
we consider the fully observable setting in which both the set 
of collisions and their temporal ordering are known, and 
characterize the exact condition under which all object positions 
can be uniquely recovered. In Section~\ref{wotwoel}, we relax 
this assumption and study the case where only the set of colliding 
pairs is observed without timing information; we show that a 
canonical layer decomposition can still be recovered. Finally, in 
Section~\ref{sec:missing}, we consider the most challenging 
setting where the observed collision graph may be incomplete, and 
establish a connection to the graph bandwidth problem that implies 
NP-hardness of the completion problem.

%% file: sec_result.tex
%We now present our main results, organized by the level of observability of the collision data.

\subsection{Exact Recovery Under Full Observability}\label{sec_ch}
We start by establishing the exact condition for recovery of all $\p_{k}$ given a complete collision history.
\begin{lemma}\label{ch_uep}
There is a unique $\pe{\hg}$ iff $\hg$ is connected.
\end{lemma}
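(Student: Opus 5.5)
The plan is to treat a position as a permutation of $\mV$. A history $\hg$ is realizable from a starting position $\p_0$ if, for every $k$, the endpoints of $\e_k$ are adjacent in $\p_{k-1}$, and $\p_k$ is obtained from $\p_{k-1}$ by swapping them. Adjacency is forced by continuity and assumptions (2)–(3): at the collision time nothing can lie strictly between the two trajectories. So the statement becomes: all positions $\p_0$ consistent with $\hg$ produce the same end-position $\pe{\hg}$ up to inversion iff $\hg$ is connected. I will prove the two directions separately, assuming throughout that $\hg$ comes from at least one genuine set of trajectories.

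\emph{Disconnected $\Rightarrow$ not unique.} Fix one realization $\{\mv_i(t)\}$ and let $C$ be a connected component of $\hg$, with $R=\mV\setminus C\neq\emptyset$. Objects in $C$ never collide with objects in $R$. Hence the history is unchanged if, for a sufficiently large constant $M$ exceeding the range of all trajectories over the collision window, I either
\begin{itemize}
\item replace $\mv_i(t)$ by $\mv_i(t)+M$ for $i\in C$, or
\item replace $\mv_i(t)$ by $-\mv_i(t)+M$ for $i\in C$.
\end{itemize}
Negation keeps collisions among members of $C$ and preserves their order in time. (If the trajectories are unbounded, I first truncate them outside the window.) The two choices give end-positions of the form $(\pi_R,\pi_C)$ and $(\pi_R,\mathrm{rev}(\pi_C))$. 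If $|C|\ge 2$, these differ and are not reversals of each other, since both begin with $\pi_R$. If every component is a singleton, then $\hg$ has no edges and every permutation is consistent, so uniqueness fails once $n\ge 3$. The degenerate case $n=2$ with no collision must be excluded or treated as a convention.

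\emph{Connected $\Rightarrow$ unique.} I would process the history forward. Let $\hg_k$ be the prefix multigraph on $\e_1,\dots,\e_k$. The invariant to prove by induction on $k$ is: for every component $A$ of $\hg_k$, the restriction of $\p_k$ to $A$ is the same across all consistent realizations, up to reversal. When $\e_{k+1}=uv$ lies inside a component, the invariant is preserved trivially because the swap acts within the known restricted order. When $\e_{k+1}$ merges components $A\ni u$ and $B\ni v$, I know the orders on $A$ and on $B$, each up to reversal, and that $u,v$ are adjacent in $\p_k$. I must show this pins down the interleaving of $A$ and $B$ up to a global reversal.

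This merge step is the main obstacle. Adjacency of $u$ and $v$ alone does not suffice: vertices of $A$ and $B$ lying on the far sides of $u$ and $v$ could interleave in several ways. My first attempt would be to strengthen the invariant so that each component of $\hg_k$ occupies a contiguous block of $\p_k$, apart from vertices not yet touched by any collision. Then adjacency of $u$ and $v$ forces the blocks to abut at $u,v$, and the block orientations are forced by $u$ and $v$ being the facing endpoints.

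If contiguity fails in general, I would run the argument backward in time instead. Starting from the last collision, whose endpoints are adjacent in $\pe{\hg}$, I would show that each earlier adjacency constraint, transported to the end through the intervening swaps, fixes the relative order of the newly attached vertex.

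In either version the conclusion is the same. Once $\hg_k$ becomes connected, the whole $\p_k$ is fixed up to inversion, and the later swaps preserve this, so $\pe{\hg}$ is unique.
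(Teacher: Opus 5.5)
Your argument for ``disconnected $\Rightarrow$ not unique'' is sound. It is more careful than the paper's one-sentence remark, and you rightly note that $n=2$ with no collision is a genuine exception to the statement as written.

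The gap is in the other direction, exactly at the merge step you single out. You propose the contiguity invariant but never prove it, and you hedge with ``if contiguity fails in general''. The backward-in-time alternative is only a sketch, so the nontrivial direction is not yet established.

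The invariant as you phrase it, with an exception for untouched vertices, would also not suffice. Suppose $e_{k+1}=uv$ where $v$ has never collided, so $B=\{v\}$. Your version then permits a configuration such as $(a_1,u,v,a_3)$ with $A=\{a_1,u,a_3\}$. There $u$ is not an endpoint of $A$, and the side of $u$ on which $v$ sits is not forced.

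What is missing is the fact that every component of $H_{\le k}$, singletons included, occupies a contiguous block of $p_k$, with no exceptions. No hedging is needed: it follows by induction from the adjacency fact you already established.
\begin{itemize}
\item Initially all components are singletons, which are trivially contiguous.
\item If $e_{k+1}=uv$ lies inside one component, the swap exchanges two adjacent positions within that component's block.
\item If it joins $A\ni u$ and $B\ni v$, then $A$ and $B$ occupy disjoint intervals of positions containing the adjacent positions of $u$ and $v$. So the intervals abut, $A\cup B$ occupies an interval, and the swap keeps it there.
\end{itemize}
Alternatively, argue directly. A vertex $y\notin A$ never collides with a member of $A$ during the prefix, so its order relative to each member of $A$ is fixed. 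If $y$ lay between two members of $A$, connectivity of $A$ would give a collision between a member permanently below $y$ and one permanently above $y$, which is impossible.

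With this invariant, $u$ must be the end of $A$'s known order facing $B$, and $v$ the end of $B$'s order facing $A$. The order on $A\cup B$ is therefore forced up to a global reversal. This is precisely the step the paper takes when it asserts ``clearly $v_i\in\{u_1,u_{|U|}\}$'' and concatenates the flipped blocks. Once you supply it, your forward induction is the paper's proof.
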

\begin{IEEEproof}
We first show that $\hg$ being connected implies a unique $\pe{\hg}$. We prove this by induction on the number of collisions. Let $\hg_{\leq k}$ be the partial history that keeps the first $k$ collisions of $\hg$. Let $\e_{k}=\mv_{i}\mv_{j}$ be the last collision in $\hg_{\leq k}$.
For $k=1$, clearly $\pe{\hg_{\leq 1}}= \lp\mv_{i},\mv_{j}\rp$ since there is only one collision between two vertices.
For $k>1$,
\begin{enumerate}
    \item If $\hg_{\leq k-1}$ is connected, then by assumption, $\pe{\hg_{\leq k-1}}$ is unique. By swapping $\mv_{i},\mv_{j}$ in $\pe{\hg_{\leq k-1}}$, we have the unique $\pe{\hg_{\leq k}}$.
    \item If $\hg_{\leq k-1}$ is disconnected, let $U,W$ be two connected components in $\hg_{\leq k-1}$ with $\mv_{i}\in U, \mv_{j}\in W$. 
    By assumption, we have a unique end-position $\lp u_{1},\hdots,u_{|U|}\rp$ for $U$, and $\lp w_{1},\hdots,w_{|W|}\rp$ for $W$.
    Clearly, $\mv_{i}\in\{u_{1},u_{|U|}\},\mv_{j}\in\{w_{1},w_{|W|}\}$, say $\mv_{i}=u_{1},\mv_{j}=w_{1}$.
    Then, by first flipping $\lp u_{1},\hdots,u_{|U|}\rp$ and concatenating it with $\lp w_{1},\hdots,w_{|W|}\rp$, and then swapping $u_{1},w_{1}$, we obtain the unique $\pe{\hg_{\leq k}}=\lp u_{|U|},\hdots,w_{1},u_{1},\hdots,w_{|W|}\rp$. The other three cases can be handled similarly. 
\end{enumerate}
If $\hg$ is disconnected we can recover unique end-positions within each component. However, there is no unique overall end-position since there are no collision between the components and one can hence not recover their relative positions. 
\end{IEEEproof}
Lemma~\ref{ch_uep} states that, given a connected $\hg$, one can uniquely determine the final ordering $\p(\hg)$ (up to reversal). 
% And it is not hard to extend to all $\p_{k}$. 
This argument is extended to all $\p_k$ in the following result.
\begin{theorem}\label{ftol}
    There is a unique $\p_{k}, \forall k,$ iff $\hg$ is connected.
\end{theorem}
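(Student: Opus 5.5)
We derive the theorem from Lemma~\ref{ch_uep} by propagating the end-position backwards through the history. Assume first that $\hg$ is connected. By Lemma~\ref{ch_uep}, the end-position $\pe{\hg}$ is unique up to reversal; call it $\p_{m}$, where $m$ is the number of collisions. The key observation is that, by assumption (3), each collision $\e_{k}=\mv_{i}\mv_{j}$ is a genuine crossing. By assumption (2), $\e_{k}$ is the only event at its time. Hence $\p_{k}$ is obtained from $\p_{k-1}$ by swapping $\mv_{i}$ and $\mv_{j}$, and these two objects must be adjacent in $\p_{k-1}$ just before they meet; otherwise some third trajectory lying between them would, by continuity and the intermediate value theorem, have to be crossed at the same instant, contradicting (2). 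Conversely, $\p_{k-1}$ is obtained from $\p_{k}$ by the same transposition. Downward induction on $k$, starting from $\p_{m}$, then shows that every $\p_{k}$ is uniquely determined, with the same global choice of orientation throughout. Choosing the reversed end-position reverses every $\p_{k}$ simultaneously, which matches the inherent ambiguity noted in Section~\ref{sec:formulation}.

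For the converse, suppose $\hg$ is disconnected, with components $U$ and $W$ (possibly among others). I would show that already the final ordering fails to be unique, which suffices because $\p_{m}=\pe{\hg}$ is among the $\p_{k}$ required to be unique. This is exactly the ``only if'' direction of Lemma~\ref{ch_uep}. To make it concrete, I would exhibit two families of trajectories realizing $\hg$. In the first, the trajectories of $U$ live in the strip $(0,1)$ and those of $W$ in $(2,3)$. In the second, the $W$ trajectories are translated down to $(-2,-1)$. Both produce the same labeled history, since no inter-component collisions exist and translating a component changes no collision times. Yet $U$ lies entirely above $W$ in one family and entirely below it in the other. Reversal cannot identify the two orderings, because reversal also reverses the internal order within each component. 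A small case check is needed here: if each component's internal ordering is itself reversal-symmetric, it must contain a single vertex, and then at least three components or a component of size at least two is needed. In that situation I would instead use an interleaving, placing a component of size one between the members of another component, to break the symmetry.

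The main obstacle is justifying that the transposition is local, that is, that $\mv_{i}$ and $\mv_{j}$ are adjacent in $\p_{k-1}$. Without this, swapping labels could mean something other than an adjacent transposition, although labeling by position swap works either way. In fact, uniqueness only needs the fact that $\p_{k}$ and $\p_{k-1}$ differ by exchanging the positions of $\mv_{i}$ and $\mv_{j}$, which follows from (2) and (3) alone. A second subtlety is the degenerate disconnected case of isolated vertices with $n=2$ and no collisions. There the two orderings $(\mv_1,\mv_2)$ and $(\mv_2,\mv_1)$ are reversals of each other, so uniqueness up to reversal holds trivially. I would exclude this case or note it, following the convention implicit in Lemma~\ref{ch_uep}.
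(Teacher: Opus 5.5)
Your proposal is correct and follows the paper's route. You get the end-position from Lemma~\ref{ch_uep}, undo the collisions in reverse order one transposition at a time, and inherit the converse from the ``only if'' part of that lemma; your explicit two-strip construction and your caveat that the converse fails for $n=2$ with no collisions (the two orderings are then reversals of each other) are more careful than the paper, which glosses over the reversal ambiguity.

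One small correction concerns your ``interleaving'' fallback. A singleton component cannot be placed between members of another connected component, because by the argument of Lemma~\ref{solid} a component that never collides with a vertex lies entirely on one side of it. The fallback is also unnecessary: choosing $U$ with $|U|\ge 2$, or holding a third component fixed while exchanging two singletons, already breaks the reversal symmetry in your construction.
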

\begin{IEEEproof}
    By Lemma~\ref{ch_uep}, we can recover the unique end-position $\pe{\hg}$ if $\hg$ is connected. 
    Then, by considering the collision history in reverse order, we can undo each collision, and hence obtain $p_k$ for all $k$. 
    % Then we reverse the history backwards, starting from last collision and then the one before, $\hdots$, etc to get the unique $\p_{k}$ for all $k$. 
    In each step, we undo $\e_{k}=\mv_{i}\mv_{j}$ 
    by swapping $\mv_{i},\mv_{j}$. 
\end{IEEEproof}

\pdfinclusioncopyfonts=1
%\vspace{+0.05in}

\begin{figure}[H]
% \vspace{-5mm}
    \centering
    \includegraphics[scale=0.24]{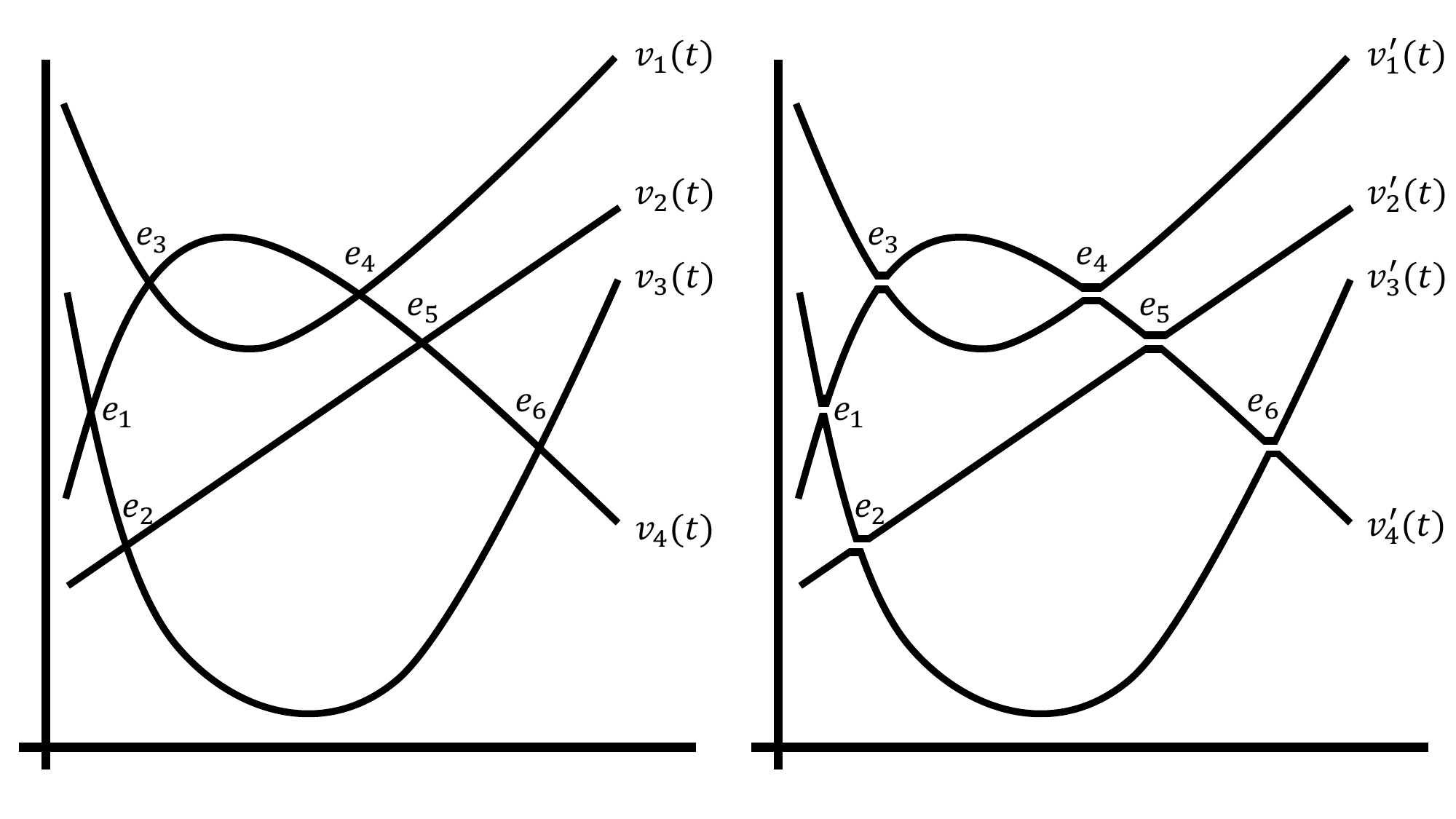}
    \caption{An example with four curves and six collisions. By sequentially swapping $\{\e_{i}\}_{i=1}^{6}$, we make four new curves $\{\mv_{i}'(t)\}_{i=1}^{4}$ without any crossing, and therefore the resulting collision graph is a path.}\label{swapping}
\end{figure}
We provide an alternative proof of Theorem~\ref{ftol} based on sequentially swapping collisions. Formally, swapping a collision $e_k = uv$ means exchanging the future trajectories of $u(t)$ and $v(t)$ after $e_k$ while keeping their past trajectories unchanged, so that the two new trajectories $u'(t)$ and $v'(t)$ touch but do not cross at $e_k$. Equivalently, this operation exchanges all edges incident to $u$ and $v$ that occur after $e_k$ in the collision graph. Applying this swap sequentially to all collisions $e_1, e_2, \ldots$ in order produces a new set of trajectories in which no two curves cross each other and each object collides only with its immediate left and right neighbors, so the resulting collision graph is a path, as illustrated in Figure~\ref{swapping}. Since the positions of objects along a path are immediately readable, all $p_k$ for the new set of trajectories can be recovered directly. Recovering the $p_k$ for the original trajectories then follows by undoing the swaps in reverse order.
%An alternative perspective on Theorem~\ref{ftol} proceeds by sequentially ``turning off'' all collisions. Formally, turning off a collision $e_k = uv$ means swapping the future trajectories of $u(t)$ and $v(t)$ after $e_k$ while keeping their past trajectories unchanged, so that the two new trajectories $u'(t)$ and $v'(t)$ touch but do not cross at $e_k$. Equivalently, this operation exchanges all edges incident to $u$ and $v$ that occur after $e_k$ in the collision graph. Applying this procedure to all collisions sequentially transforms the collision graph into a path, where every object bounces only between its left and right neighbors, as illustrated in Figure~\ref{swapping}. Before turning off collisions, objects move around and collide with many others, making it hard to track their relative positions. After turning off all collisions, however, each object barely moves and only bounces between its left and right neighbors, so their positions become easy to recover directly from the path. Recovering all $p_k$ for the original trajectories then follows by turning all collisions back on in reverse order.
%\iscomment{I'm confused by the paragraph above. What do you mean by an alternative ``perspective''? An alternative proof? Also, calling it ``turning off a collision'' doesn't seem accurate, because (if I understand it correctly), the collision still happens. Should you call ``swapping'' each collision? Also, you should more clearly state that, after swapping each collision sequentially, the resulting graph becomes a line, and briefly explain why. }

\subsection{Layer Decomposition Under Partial Observability}\label{wotwoel}
As shown in Section~\ref{sec_ch}, with a complete history, one can recover all $\p_{k}$ iff $\hg$ is connected. However, in 
practice, the full temporal ordering of collisions may not be available, as is the case in the sensing systems with limited resolution,  
discussed in Section~I. Therefore, we address next the setting without time information. Specifically, our observations form a 
simple graph $G = (\mV, \E)$ with unlabeled edges, where $uv \in \E$ 
indicates that objects $u$ and $v$ collided at some point in time.
This model is closely related to \textit{function graphs}. 
Formally, a graph $G = (V, E)$ is called a function graph if 
there exist continuous functions $f_1, f_2, \ldots, f_n : 
[0,1] \to \mathbb{R}$ such that each vertex $v_i \in V$ 
corresponds to $f_i$, and $v_iv_j \in E$ if and only if 
$f_i(t) = f_j(t)$ for some $t \in [0,1]$. It is known that 
$G$ is a function graph if and only if its complement is a 
comparability graph, which also implies that function graphs 
are recognizable in polynomial time~\cite{golumbic1983comparability}. 
Function graphs generalize permutation graphs, which correspond to the special case where all functions are linear~\cite{golumbic1983comparability}.

In this model, exact recovery of positions is generally 
impossible. Instead, our goal is as follows. We say that $\mv_{i}>\mv_{j}$ iff $\mv_{i}(t)>\mv_{j}(t)$ for all $t$. For any $S\subset\mV$, the subset $\Max{S}:=\{\mv_{i}\in S|\nexists\mv_{j}\in S\text{ with }\mv_{j}>\mv_{i}\}$ is called the max of $S$. 
$\Min{S}$ is defined similarly.
Define $\mV_{1}:=\Max{\mV}$ as the first layer, and $\mV_{i}:=\Max{(\mV-\cup_{j=1}^{i-1}\mV_{j})}$ as the $i$-th layer of $\mV$. 
A layer is a maximal set of objects whose relative order cannot be distinguished based on unordered collisions. In particular, as shown in Corollary~\ref{laycli}, all objects within the same layer must collide with one another.
Given $\hg = (\mV,\E)$, our aim is to recover all layers $\lp\mV_{1},\mV_{2},\hdots,\mV_{k}\rp$ that partition $\mV$.

The notion of a \textit{module} plays a central role in the 
structural analysis of function graphs. A module is a set of vertices that behaves 
``uniformly'' with respect to the rest of the graph -- every 
vertex outside the module is either adjacent to \emph{all}  vertices 
in the module or to \emph{none.} This concept was first introduced 
by Gallai~\cite{gallai1967transitiv} to study the structure 
of comparability graphs, and has since become a fundamental 
tool in the recognition and decomposition of many graph 
classes~\cite{golumbic2004algorithmic}.

\begin{definition}
A set $S \subseteq \mV$ is called a \textit{module} if $|S| > 1$ 
and each $u \in \overline{S}$ connects to all or none of the vertices 
in $S$.
\end{definition}

\begin{proposition}
Let $S = \{\mv_1, \mv_2, \ldots\}$ be a module with 
$T = \{\mv_1(t), \mv_2(t), \ldots\}$ being the set of corresponding trajectories. Define $T'_{\mv_k} := \{\mv'_i(t) := \epsilon \mv_i(t) + \mv_k(t)\}$ as a set of trajectories for $S$ that shrink old trajectories $T$ toward $\mv_k$.
%\iscomment{I find this definition of $T'$ a bit confusing, as it seems that $v_1$ plays a special role? All trajectories in $T$ are being shrunk towards $v_1(t)$? If so, should this be $T'_{v_1}$ or something like that? Also, does $v_1(t)$ become $(1+\epsilon)v_1(t)$? The proof, doesn't seem to make it clear which path the other paths are being shrunk towards}
Then replacing $T$ with $T'_{\mv_k}$
would result in the same function graph.
\end{proposition}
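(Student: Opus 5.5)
The plan is to check the three kinds of vertex pairs separately: pairs inside $S$, pairs with one vertex outside $S$ that is adjacent to none of $S$, and pairs with one vertex outside $S$ that is adjacent to all of $S$. Trajectories not in $S$ are left unchanged, so pairs outside $S$ keep their adjacency automatically. Throughout I work on the compact time domain $[0,1]$ of the function-graph definition, so every trajectory is bounded. Let $M:=\max_{\mv_i\in S}\sup_t|\mv_i(t)|<\infty$. The parameter $\epsilon>0$ will be chosen small enough at the end, and only finitely many constraints will be imposed on it.

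\emph{Pairs inside $S$.} For $\mv_i,\mv_j\in S$ we have $\mv'_i(t)-\mv'_j(t)=\epsilon(\mv_i(t)-\mv_j(t))$. Since $\epsilon>0$, the zeros coincide, and so do the sign patterns before and after each zero. Hence adjacency is preserved exactly, including crossing behaviour. This holds also when one of the two is $\mv_k$ itself, which is replaced by $(1+\epsilon)\mv_k$.

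\emph{Outside vertex $u$ adjacent to none of $S$.} Here $u-\mv_k$ is continuous and never zero on $[0,1]$. By compactness, $|u(t)-\mv_k(t)|\ge\delta_u>0$ for all $t$. Then $|u(t)-\mv'_i(t)|\ge\delta_u-\epsilon M>0$ whenever $\epsilon<\delta_u/M$. So $u$ still misses every new trajectory, and non-edges are preserved.

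\emph{Outside vertex $u$ adjacent to all of $S$.} This is the step I expect to be the main obstacle, because adjacency must survive a perturbation of the trajectories. Since $u\mv_k\in\E$, there is a collision time $t_0$ with $u(t_0)=\mv_k(t_0)$. By observation assumption (3) the two curves genuinely cross there. So there are times $t^-<t_0<t^+$ with $u(t^-)-\mv_k(t^-)$ and $u(t^+)-\mv_k(t^+)$ of opposite signs, both of absolute value at least some $\eta_u>0$. Now $u-\mv'_i=(u-\mv_k)-\epsilon\mv_i$. For $\epsilon<\eta_u/M$ this difference keeps the same opposite signs at $t^-$ and $t^+$. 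By the intermediate value theorem it vanishes somewhere in $(t^-,t^+)$, so $u\mv'_i$ is an edge for every $\mv_i\in S$. The crossing assumption is essential here: a mere tangency of $u$ and $\mv_k$ could be destroyed by the perturbation. This is exactly why the module is shrunk toward one of its own members $\mv_k$ rather than toward an arbitrary curve.

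Taking $\epsilon<\min_u\min(\delta_u,\eta_u)/M$ over the finitely many outside vertices $u$ satisfies all constraints simultaneously. Since $S$ is a module, every outside vertex falls into one of the two cases above, so the edge set of the function graph is unchanged. If one also wants the new family to satisfy assumptions (2)–(3), e.g.\ no simultaneous collisions and transversal crossings, I would remark that these are generic. A further arbitrarily small perturbation of $\epsilon$ or of the curves restores them without changing any of the strict inequalities used above.
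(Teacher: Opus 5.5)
Your proof is correct and follows the same route as the paper: scaling by $\epsilon$ preserves every intersection inside $S$, and for an outside vertex the module property reduces the question to whether it meets $v_k$, near which all shrunken curves lie. Your version is more careful than the paper's appeal to the curves becoming ``arbitrarily close'' to $v_k$ as $\epsilon\to 0$: you make the non-adjacent case rigorous via compactness of $[0,1]$, and the adjacent case via the crossing assumption (3) plus the intermediate value theorem, correctly noting that a mere tangency would not survive the perturbation.
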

\begin{IEEEproof}
    For $\mv_{i},\mv_{j}\in S$, it is clear that $\mv_{i}(t),\mv_{j}(t)$ intersect iff $\mv'_{i}(t),\mv'_{j}(t)$ intersect. For $\mv_{i}\in S,\mv_{j}\in\overline{S},$ it is also clear that $\mv_{i}(t),\mv_{j}(t)$ intersect iff $\mv_{i}' (t),\mv_{j}(t)$ intersect. Since for $\epsilon \to 0$ all $v'_i(t)$ become arbitrarily close to $v_k(t)$, 
any $v_j(t) \notin S$ that intersects $v_k(t)$ must intersect all $v'_i(t)$, 
and any $v_j(t)$ that does not intersect $v_k(t)$ intersects none of them. See Figure~\ref{shrink} for an illustration.
\end{IEEEproof}
\begin{figure}[t]
 
    \centering
    \includegraphics[width=0.4\textwidth]{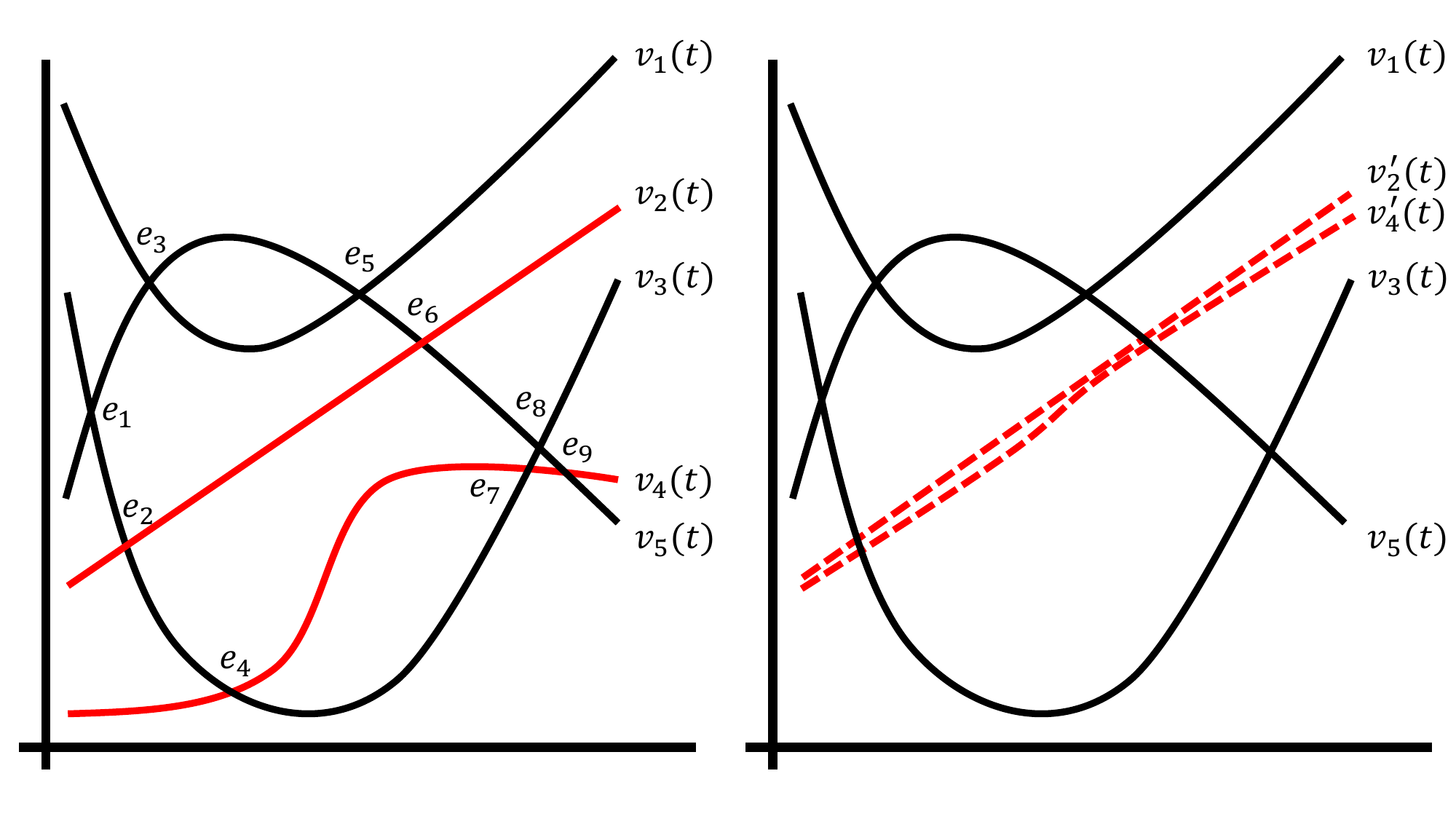}
    \caption{An example of \emph{module shrinking} with five curves and nine collisions: $\mv_{2}(t),\mv_{4}(t)$, colored red, intersect the remaining three curves in the same way. Therefore, we can ``shrink'' the two of them together vertically by an $\epsilon$ factor until they almost become one thin line (dashed), then paste it following any of the original red curves (in this example, $\mv_{2}(t)$). The function graph remains the same after replacing $\{\mv_{2}(t),\mv_{4}(t)\}$ by $\{\mv_{2}'(t),\mv_{4}'(t)\}$.}
    \label{shrink}
\end{figure}
%It is appealing to first identify all modules and take one %representative from each module to study how objects from %different modules intersect each other, and then refine %details within each module. The following proposition shows %that identifying modules is easy.Therefore, we assume there 
%is no module in the rest of the discussion.

It is appealing to first identify all modules and take one 
representative from each module to study how objects from different 
modules intersect, and then subsequently determine patterns within each 
module. Since module decomposition is a well-studied problem in the graph theory 
literature and can be performed in polynomial time~\cite{golumbic2004algorithmic}, 
we hence assume there is no module in function graph in the rest of the discussion.

\begin{lemma}\label{cliq}
For any $S\subseteq\mV$, both $\Max{S},\Min{S}$ are maximal cliques. 
\end{lemma}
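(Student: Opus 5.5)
The plan is to use the intermediate value theorem to turn ``no collision'' into ``strict dominance'', and then use that the dominance relation $>$ is a strict partial order. I read ``maximal clique'' as maximal among cliques of the induced subgraph $G[S]$; for $S=\mV$, and hence for the first layer $\mV_1$, this is the same as being a maximal clique of $G$. I only treat $\Max{S}$; the argument for $\Min{S}$ is symmetric, replacing $>$ by $<$, or equivalently applying the argument to the reflected trajectories $\{-\mv_i(t)\}$.

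\textbf{Step 1 (non-adjacency forces comparability).} Let $u,w\in\mV$ with $uw\notin\E$. Then $u(t)-w(t)$ is continuous and never zero, so by the intermediate value theorem it has constant sign. Hence either $u>w$ or $w>u$. Conversely, if $u>w$, the trajectories never meet, so $uw\notin\E$. Thus, for distinct vertices, non-adjacency in $G$ is exactly comparability under $>$. This is the standard fact that the complement of a function graph is a comparability graph.

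\textbf{Step 2 ($\Max{S}$ is a clique).} Take distinct $u,w\in\Max{S}$. If $uw\notin\E$, Step 1 gives, say, $w>u$ with $w\in S$. This contradicts $u\in\Max{S}$. So every pair in $\Max{S}$ collides. Note also that $\Max{S}\neq\emptyset$ when $S\neq\emptyset$: $>$ is irreflexive and transitive, since $u(t)>w(t)>x(t)$ for all $t$ gives $u(t)>x(t)$ for all $t$. A strict partial order on a finite set has a maximal element.

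\textbf{Step 3 (maximality).} Let $u\in S\setminus\Max{S}$. Then some $w_1\in S$ satisfies $w_1>u$. If $w_1\notin\Max{S}$, pick $w_2\in S$ with $w_2>w_1$, and continue. By transitivity and finiteness, this chain cannot repeat a vertex, so it terminates at some $m\in\Max{S}$ with $m>u$. By Step 1, $mu\notin\E$. So no vertex of $S\setminus\Max{S}$ can be added to $\Max{S}$ while keeping a clique, and $\Max{S}$ is a maximal clique of $G[S]$.

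The main obstacle is the scope of ``maximal''. For a general $S$, a vertex outside $S$ could be adjacent to all of $\Max{S}$, so maximality in $G$ itself can fail; for example, take $S$ a single vertex. I would therefore state and prove the claim relative to $G[S]$. I would also point out that the intended application is to $S=\mV-\cup_{j<i}\mV_j$. For that, the relevant property is that each layer $\mV_i$ is a clique that is maximal among vertices not yet assigned to earlier layers. This is exactly what Steps 2--3 deliver, and it is what Corollary~\ref{laycli} needs. The no-module assumption is not needed for this argument.
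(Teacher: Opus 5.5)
Your proof is correct and follows the paper's argument: non-adjacency forces comparability under $>$, so $\Max{S}$ is a clique, and every $w\in S-\Max{S}$ is dominated by some element of $\Max{S}$, so $w$ cannot be added to it. Your IVT and transitivity/finiteness steps make explicit what the paper leaves implicit, and you are right that both arguments only give maximality within the induced subgraph $G(S)$, which is the sense in which the lemma is later used (e.g., in Lemma~\ref{space}).
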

\begin{IEEEproof}
We only prove the claim for $\Max{S}$. For any $u,v \in V$ such that $uv \notin E$, we must have $v > u$ or $u > v$. Hence, $u$ and $v$ cannot both be in $\Max{S}$, and $\Max{S}$ must be a clique.
To prove that $\Max{S}$ is a maximal clique, we first note that, for any $w \in S - \Max{S}$, we must have $v \in \Max{S}$ such that $v > w$. Hence, $vw \notin E$, and $w$ cannot be part of a clique with $\Max{S}$.
\end{IEEEproof}

\begin{corollary}\label{laycli}
    Each layer in $\lp\mV_{1},\mV_{2},\hdots,\mV_{k}\rp$ is a maximal clique. The corollary follows from recursively defining  $\mV_{i}:=\Max{(\mV-\cup_{j=1}^{i-1}\mV_{j})}$.
\end{corollary}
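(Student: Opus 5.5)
The plan is to obtain Corollary~\ref{laycli} directly from Lemma~\ref{cliq} by choosing the right set $S$ for each layer, and then to address the one subtle point: maximality relative to the whole vertex set. Fix $i$ and set $R_i := \mV-\cup_{j=1}^{i-1}\mV_{j}$, so that by definition $\mV_i=\Max{R_i}$. Applying Lemma~\ref{cliq} with $S=R_i$ immediately shows that $\mV_i$ is a clique. It is also maximal among cliques contained in $R_i$: any $w\in R_i-\mV_i$ is dominated by some $v\in\mV_i$ with $v>w$, hence $vw\notin\E$. Before this, I would check that the layers are nonempty and exhaust $\mV$. Since $>$ is a strict partial order (irreflexive and transitive, because pointwise strict inequality of functions is transitive), every nonempty finite $R_i$ has a nonempty set of maximal elements. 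So each $\mV_i\neq\emptyset$, the sets $R_i$ strictly shrink, and the process ends with a partition $(\mV_1,\hdots,\mV_k)$.

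The main obstacle is maximality in all of $G$, not just in $G[R_i]$. A vertex $u\in\mV_j$ with $j<i$ could a priori be adjacent to every vertex of $\mV_i$. To rule this out, I would show that for every $u\in\mV_j$, $j<i$, some $v\in\mV_i$ satisfies $u>v$, so that $uv\notin\E$.

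I would argue this by induction on $i$, proving the stronger statement that every element $w$ of $R_i$ lies below some element of each earlier layer. The base case is the claim just used for a fixed layer: each $w\in R_{j}-\mV_j$ is below some element of $\mV_j$. However, that gives the wrong direction of quantifiers; what I need is that each $u\in\mV_j$ lies above some element of $\mV_i$. My first attempt would use the no-module assumption together with the function-graph structure, namely that the complement is a comparability graph. If some $u\in\mV_j$ had no element of $\mV_i$ below it, then $u$ would be comparable to no element of $\mV_i$ from above. Since $\mV_i$ elements are not above $u$ either (that would contradict $u$ being maximal at a level where they were still present), $u$ would be adjacent to all of $\mV_i$. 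Then $u$ together with $\mV_i$ would be a clique, and I would look for a contradiction with the layered definition, or else produce a module.

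If that route fails, I would fall back on stating maximality within $G[R_i]$, which is exactly what the recursive application of Lemma~\ref{cliq} gives. This is plausibly the intended reading of the corollary, so the proof would then be a short recursive invocation of Lemma~\ref{cliq}.
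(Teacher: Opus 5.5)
Your fallback is exactly the paper's argument. The corollary is justified only by applying Lemma~\ref{cliq} to $S=V-\bigcup_{j<i}V_j$. The proof of Lemma~\ref{cliq} only rules out vertices \emph{of $S$} from extending $S_{\max}$, so what is actually established is that $V_i$ is a maximal clique of $G(R_i)$. That is also the form used later: the proof of Lemma~\ref{space} appeals to ``$V_i$ is not a maximal clique among $\bigcup_{j=i}^{r}V_j$''. So you were right to flag maximality in all of $G$ as the real issue.

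Your primary route cannot be completed, however. You hoped to show that every $u\in V_j$ with $j<i$ dominates some element of $V_i$. That statement is false, even for connected, module-free function graphs.

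Take six objects moving linearly on $[0,1]$ with heights $1,\dots,6$. Their bottom-to-top order is $(w,u,b,c,d,a)$ at $t=0$ and $(d,w,b,a,c,u)$ at $t=1$. Two segments cross iff their relative order flips, so $E=\{ac,au,cu,bd,bu,cd,du,dw\}$. The dominance relations are $a>b$, $a>d$, $a>w$, $c>b$, $c>w$, $b>w$ and $u>w$. The layers are therefore $V_1=\{a,c,u\}$, $V_2=\{b,d\}$ and $V_3=\{w\}$. Yet $u$ collides with both $b$ and $d$, so $\{b,d,u\}$ is a clique strictly containing $V_2$; likewise $\{d,w\}\supsetneq V_3$.

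The graph is connected, and a short check shows it has no module $M$ with $1<|M|<6$. Since $w$'s only neighbour is $d$, a module avoiding $w$ must lie in $\{a,b,c,u\}$. Then $d$ confines it to $\{b,c,u\}$, $a$ confines it to $\{c,u\}$, and $b$ separates $c$ from $u$. A module containing $w$ is either $\{w\}$ or $V$. Perturbing the endpoints slightly, for example $u\colon 2\to 6.5$, also makes all crossing times distinct.

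So neither the no-module assumption nor the co-comparability structure rescues the unqualified reading. Commit to the relative statement: $V_i$ is a maximal clique of $G(V-\bigcup_{j<i}V_j)$. Your recursive application of Lemma~\ref{cliq}, together with your observation that each $w\in R_i-V_i$ is dominated by some element of $V_i$, proves this completely.
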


\begin{lemma}\label{space}
    Let $\hg=(\mV,\E)$ be a function graph with layers $\lp\mV_{1},\mV_{2},\hdots,\mV_{k}\rp$. Then, for all $\ell\leq i\leq r$ with $ u_{\ell}\in\mV_{\ell},u_{r}\in\mV_{r}\text{,and }u_{\ell}u_{r}\in\E$, there exists a $u_{i}\in\mV_{i}$ with $u_{\ell}u_{i}\in\E$.  
\end{lemma}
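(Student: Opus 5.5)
The plan is to build a vertex $u_i\in\mV_i$ that lies strictly above $u_r$, and then show that $u_i$ cannot avoid $u_\ell$ without forcing $u_\ell>u_r$. Throughout I use the fact already used in the proof of Lemma~\ref{cliq}: two continuous trajectories that never intersect are comparable under $>$, by the intermediate value theorem. Equivalently, $xy\notin\E$ implies $x>y$ or $y>x$. I also use that $>$ is transitive and irreflexive.

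\textbf{Step 1 (dominators in earlier layers).} I would first prove the following claim. For every $w\in\mV_j$ and every $i<j$, there is some $w'\in\mV_i$ with $w'>w$.

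Let $R_i:=\mV-\cup_{m<i}\mV_m$, so that $\mV_i=\Max{R_i}$. Since $w\in R_i$ but $w\notin\Max{R_i}$, the set $D=\{x\in R_i : x>w\}$ is nonempty. Take $x\in D$ that is maximal in $D$ with respect to $>$; this exists by finiteness and because $>$ is a strict order. If $x$ were not in $\Max{R_i}$, some $y\in R_i$ would satisfy $y>x$. Then $y>w$ by transitivity, so $y\in D$, which contradicts the maximality of $x$. Hence $x\in\mV_i$, and we take $w'=x$. This step is the one requiring the most care, because the layers are defined through $\Max{\cdot}$ of successive remainders rather than by a direct comparison.

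\textbf{Step 2 (main argument).} The endpoint cases are handled separately. For $i=r$, take $u_i=u_r$. For $i=\ell$, use that $\mV_\ell$ is a clique by Corollary~\ref{laycli}, and take any other element of it; if $\mV_\ell$ is a singleton, this case has to be read as degenerate. So the substantive case is $\ell<i<r$.

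By Step 1 applied to $u_r\in\mV_r$, pick $u_i\in\mV_i$ with $u_i>u_r$. Suppose, for contradiction, that $u_\ell u_i\notin\E$. Then $u_\ell$ and $u_i$ are comparable.
\begin{itemize}
\item If $u_i>u_\ell$, then $u_\ell\notin\Max{R_\ell}$, since $u_i\in R_i\subseteq R_\ell$. This contradicts $u_\ell\in\mV_\ell$.
\item Therefore $u_\ell>u_i$, and by transitivity $u_\ell>u_i>u_r$, so $u_\ell(t)>u_r(t)$ for all $t$. Hence $u_\ell$ and $u_r$ never collide, contradicting $u_\ell u_r\in\E$.
\end{itemize}
So $u_\ell u_i\in\E$, which proves the lemma.
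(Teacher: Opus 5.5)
Your proof is correct, but it is organized differently from the paper's. The paper argues by contradiction in two stages. It first claims that every vertex in layers $\ell,\dots,r$ meets $u_\ell$ or $u_r$. It then observes that if no vertex of $V_i$ met $u_\ell$, all of $V_i$ would meet $u_r$, so $V_i\cup\{u_r\}$ would be a clique of $G(R_i)$, contradicting the maximality in Lemma~\ref{cliq}. You instead construct the witness directly: Step~1 gives $u_i\in V_i$ with $u_i>u_r$, and you run the comparability argument only for that vertex.

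Unwound, both proofs rest on the same two facts, but you prove them where the paper leaves them implicit. Your Step~1 is exactly the assertion ``every $w\in S-S_{\max}$ is dominated by some $v\in S_{\max}$'', which the proof of Lemma~\ref{cliq} uses without proof. Your comparability-plus-layer-maximality reasoning in Step~2 is also what actually justifies the paper's first claim. The paper's stated reason there is a time $t^*$ with $u_\ell(t^*)>u(t^*)>u_r(t^*)$ ``by definition of layers''. It is not argued, and it need not exist for a $u$ that crosses both $u_\ell$ and $u_r$. In exchange, the paper's route yields a slightly stronger by-product: every vertex of the intermediate layers meets at least one of $u_\ell,u_r$.

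One small correction: your caveat about a singleton $V_\ell$ is unnecessary, because that case cannot occur. If $\ell<r$, Step~1 gives $w'\in V_\ell$ with $w'>u_r$, and $w'\neq u_\ell$ because $u_\ell u_r\in E$. So $|V_\ell|\ge 2$, and your main argument in fact runs verbatim for $i=\ell$. The paper itself only treats $\ell<i<r$.
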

\begin{IEEEproof}
We first show that all $u\in\bigcup_{i=\ell}^{r}\mV_{i}$ intersect with at least one of $u_{\ell},u_{r}$.
By definition of layers, there exists a $t^{*}$ such that $u_{\ell}(t^{*})>u(t^{*})>u_{r}(t^{*})$. Suppose $u$ intersects with neither of $u_{\ell},u_{r}$; then, $u_{\ell}(t)>u(t)>u_{r}(t)$ for all $t$, which implies $u_{\ell}>u>u_{r}$ and therefore $u_{\ell}u_{r}\notin\E$, a contradiction. Next, suppose the claim of the lemma is false. 
Then, $\exists\, \ell < i < r$ such that $v_\ell u \notin E$ for all 
$u \in V_i$, which implies $v_r u \in E$ for all $u \in V_i$ by the 
previous discussion. But then $\mV_{i}$ is not a maximal clique among $\bigcup_{j=i}^{r}\mV_{j}$, a contradiction of the definition of layers and Lemma~\ref{cliq}. 
\end{IEEEproof}

\begin{lemma}\label{interval}
    For any function graph $G=(\mV,\E)$ with layers $\lp\mV_{1},\mV_{2},\hdots,\mV_{k}\rp$, let $\hg^{\circ}=(\mV^{\circ},\E^{\circ})$ be its contraction graph with $\mV^{\circ}=\{\mv_{1},\hdots,\mv_{k}\},\E^{\circ}=\{\mv_{i}\mv_{j}|\exists u_{i}\in\mV_{i},u_{j}\in\mV_{j}\text{ with }u_{i}u_{j}\in\E\}$. Then, $\hg^{\circ}$ is an interval graph, i.e., a graph 
admitting a representation in which each vertex corresponds to an 
interval on the real line and two vertices are adjacent if and only if 
their intervals intersect~\cite{fulkerson1965incidence, booth1976testing}.
\end{lemma}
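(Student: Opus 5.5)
The plan is to exhibit the interval representation explicitly, with the layer indices as the underlying line. For each layer $\mV_i$ we assign the interval $I_i=[i,\, r(i)]$, where $r(i)$ is the largest index $j\ge i$ such that $\mV_i$ has an edge to $\mV_j$; by convention $r(i)=i$ if there is none. The aim is to show that $\mv_i\mv_j\in\E^{\circ}$ if and only if $I_i\cap I_j\neq\emptyset$. This is the standard characterization of interval graphs through an ordering in which every vertex's later neighbors form a contiguous block immediately after it, the so-called umbrella or consecutive-neighborhood ordering.

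\textbf{Contiguity.} The key step is to show that in $\hg^{\circ}$ the neighbors of $\mv_\ell$ with larger index form an initial segment $\{\ell+1,\dots,r(\ell)\}$. This is exactly what Lemma~\ref{space} provides. If $u_\ell\in\mV_\ell$ and $u_r\in\mV_r$ collide, then for every $\ell\le i\le r$ there is some $u_i\in\mV_i$ with $u_\ell u_i\in\E$. Hence $\mv_\ell\mv_i\in\E^{\circ}$ for all $\ell<i\le r(\ell)$.

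\textbf{Monotonicity of $r$.} We need one more fact: if $\ell<i\le r(\ell)$, then $r(i)\ge r(\ell)$. Let $u_\ell u_r\in\E$ with $r=r(\ell)$. From the proof of Lemma~\ref{space}, every $u\in\mV_i$ for $\ell<i<r$ intersects $u_\ell$ or $u_r$. Choose $u_i\in\mV_i$ adjacent to $u_\ell$, so that at some time $u_i$ lies on the other side of $u_\ell$ relative to the layer order. Between a time where $u_\ell>u_i$ and a time where $u_\ell<u_i$, $u_r$ must sit below $u_\ell$ at some point. I would use continuity to argue that $u_r$ must also cross $u_i$, or some other member of $\mV_i$ must cross it, which gives $r(i)\ge r$. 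The fallback is to apply Lemma~\ref{space} and the maximal-clique property of Lemma~\ref{cliq} inside $\bigcup_{j\ge i}\mV_j$, mirroring the argument in Lemma~\ref{space}'s proof. Concretely, if no vertex of $\mV_i$ meets $\mV_r$, then every vertex of $\mV_i$ meets $u_\ell$. I would then try to derive that $u_r$ lies below all of $\mV_i$ at all times, and check that this is consistent; if it is, the inequality can instead be obtained by comparing with $u_\ell$.

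\textbf{Conclusion.} Given contiguity and monotonicity, the correspondence follows. Take $i<j$. If $\mv_i\mv_j\in\E^{\circ}$, then $j\le r(i)$, so $j\in I_i\cap I_j$. Conversely, if $I_i\cap I_j\ne\emptyset$, then $j\le r(i)$, and contiguity gives adjacency. Note that the conclusion really only needs contiguity: $I_i$ meets $I_j$ exactly when $j\le r(i)$. So monotonicity is not strictly required, and the argument reduces to contiguity from Lemma~\ref{space}. I expect the main subtlety to be precisely this reduction. One must make sure Lemma~\ref{space} is applied with the lower-indexed endpoint, and that adjacency in $\E^{\circ}$ is symmetric, so that defining $r(i)$ only through larger indices loses nothing.
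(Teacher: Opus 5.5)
Your argument is correct and is essentially the paper's proof: you use the intervals $[i,i^*]$ on the layer indices, and Lemma~\ref{space} ensures that the higher-indexed neighbors of each contracted vertex form the block $\{i+1,\dots,i^*\}$, which alone yields the representation. As you note yourself, the monotonicity step is superfluous, and you should delete it rather than leave it as a sketch, because it is false in general: let one curve $u$ cross three pairwise non-crossing curves $a>b>c$; the layers are then $\{a,u\},\{b\},\{c\}$, yet $r(2)=2<3=r(1)$.
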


\begin{IEEEproof}
For each $i$, let $i^{*}:=\argmax_{j\geq i}\mathbf{1}\{\mV_{i}\mV_{j}\neq\emptyset\}$. Then by Lemma~\ref{space}, $\mv_{i}\mv_{j}\in\E^{\circ}$ iff $i\leq j\leq i^{*}$. Consequently $\hg^{\circ}=(\mV^{\circ},\E^{\circ})$ is isomorphic to the interval graph with intervals $\{[i,i^{*}]\}_{i=1}^{k}$.
\end{IEEEproof}

The layer decomposition $(V_1, V_2, \ldots, V_k)$ provides the most 
informative positional structure recoverable from unordered collisions: The contraction graph $H^\circ$ directly encodes the relative positions 
of layers, while within each layer, the relative order of objects cannot 
be further distinguished since all objects in the same layer collide with one another. We show next how to efficiently 
recover this decomposition from a given function graph $G$.

%For the rest of this section, we are interested in decomposing function graphs into the form in Lemma~\ref{space}. Because 1) One can easily deduce the relative positions of layers from the contraction graph. 2) Each layer is a set of objects whose relative order cannot be further distinguished from unordered collisions, since all objects within the same layer collide with each other. 

\begin{corollary}\label{mmmc}
For any $S\subseteq\mV$, $\Max{S}\cap\Min{S}=\{u\in S | u\lp S-{u}\rp\subseteq\E\}$.
\end{corollary}
%\begin{IEEEproof}

%Note that $u\in\Max{S}\cap\Min{S}$ implies no $v\in S-{u}$ satisfies $v>u$ or $v<u$, so $uv\in\E$ for all $v\in S-{u}$. Suppose that $u\in S$ connects to all $S-\{u\}$, and $u\notin\Max{S}$ or $u\notin\Min{S}$. Then $u\Max{S}\subseteq\E$ or $u\Min{S}\subseteq\E$. But then $\Max{S}$ or $\Min{S}$ is not maximal, a contradiction to Lemma~\ref{cliq}. 

%\end{IEEEproof}
The proof is omitted. For any $S\subseteq\mV$ it is easy to identify $\Max{S}\cap\Min{S}=\{u\in S | u\lp S-{u}\rp\subseteq\E\}$. Throughout the remainder of the paper, we hence assume $\Max{S}\cap\Min{S}=\emptyset$.
%\textcolor{red}{xxxxxxxxxxxxxxxxxxxxxxxxxxxxxxxxxxxx}

We start with the following definitions.
We say $S\leqs\mV$ if $S\subset\mV$ is ``a lower bound'' on $\mV$, which means that any $\mv_{i}\in S,\mv_{j}\in \mV-S$ with $\mv_{i}\mv_{j}\notin \E$ implies $\mv_{i}<\mv_{j}$. The upper bound is defined in a similar way. For any $u,v,w\in\mV$, we say $\br{u}{v}{w}$ if $uv\in\E$, and $uw,vw\notin\E$. For any $S\subset\mV$, define $\bn{S}:=\{v\in\overline{S}|\exists u\in S, w\in\overline{S}\text{ with }\br{u}{v}{w}\}$ as the bounded neighborhood of $S$. A bounded expansion starting from $S\subset\mV$ is a nested sequence of subsets $\W_{0}(S)\subset\W_{1}(S)\subset\hdots\subset\W_{k}(S)\subseteq\mV$ with $\W_{0}(S)=S,\W_{i}(S)=\W_{i-1}(S)\cup\bn{\W_{i-1}(S)}$, where $\bn{\W_{k}(S)}=\emptyset$. We further define $\W^{*}(S)=\overline{\W_{k}(S)}$ as the set of unreached vertices.
For any $S \subseteq V$, let $G(S)$ denote the induced subgraph of $S$.
\begin{lemma}\label{lbnl}
If $S\leqs\mV$, then $S\cup\bn{S}\leqs\mV$.
\end{lemma}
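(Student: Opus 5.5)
Set $S':=S\cup\bn{S}$. By the definition of a lower bound, we must show the following: whenever $a\in S'$, $b\in\mV-S'$ and $ab\notin\E$, we have $a<b$. Throughout we use two facts.
\begin{enumerate}
\item Two non-adjacent objects are comparable. Their continuous trajectories never meet, so one stays strictly above the other for all $t$.
\item If $x<y<z$, then $xz\notin\E$. Indeed $x(t)<y(t)<z(t)$ for every $t$, so $x(t)\neq z(t)$ for every $t$.
\end{enumerate}

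The plan is a case split on where $a$ lies. If $a\in S$, then $b\in\mV-S'\subseteq\mV-S$, and the hypothesis $S\leqs\mV$ gives $a<b$ directly. So the real work is the case $a=v\in\bn{S}$. By definition of $\bn{S}$ there are $u\in S$ and $w\in\overline{S}$ with $\br{u}{v}{w}$, that is, $uv\in\E$ and $uw,vw\notin\E$.

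First I locate $w$ relative to $u$ and $v$. Since $u\in S$, $w\notin S$ and $uw\notin\E$, the hypothesis gives $u<w$. Since $vw\notin\E$, $v$ and $w$ are comparable. If $v>w$, then $u<w<v$, and fact 2 would forbid the edge $uv\in\E$. Hence $v<w$.

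Now take $b\in\mV-S'$ with $vb\notin\E$. By fact 1, either $v<b$ (done) or $v>b$. Suppose $v>b$; I derive a contradiction by splitting on whether $ub\in\E$.
\begin{enumerate}
\item If $ub\notin\E$: since $b\notin S$, the hypothesis gives $u<b$, so $u<b<v$. By fact 2 this contradicts $uv\in\E$.
\item If $ub\in\E$: from $b<v<w$ we get $b<w$. So $b\neq w$, and by fact 1 (contrapositive of meeting) $bw\notin\E$. Together with $ub\in\E$ and $uw\notin\E$ this says $\br{u}{b}{w}$, with $u\in S$, $w\in\overline{S}$ and $b\in\overline{S}$. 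Then $b\in\bn{S}\subseteq S'$, contradicting $b\notin S'$.
\end{enumerate}
Both cases are impossible, so $v<b$, and therefore $S'\leqs\mV$.

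The main obstacle is case 2 ($ub\in\E$), where the hypothesis on $S$ says nothing directly about $b$. The key idea is to reuse the witness $w$ for $v$ to certify that $b$ itself lies in the bounded neighborhood of $S$. For this to work, the relation $v<w$ must be established beforehand, which is why that step comes first.
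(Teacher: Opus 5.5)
Your proof is correct and follows the paper's argument essentially step for step. Your $u,v,w,b$ play the roles of the paper's $v,y,x,y'$: you first derive $v<w$ exactly as the paper derives $y<x$, and your case split on $ub\in E$ is the paper's two-step finish (adjacency would put $b$ in $N^{+}(S)$, while non-adjacency forces $u<b<v$, contradicting $uv\in E$), with the explicit treatment of $a\in S$ being a case the paper leaves implicit.
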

\begin{IEEEproof}
Suppose the claim is not true. Then, there exists $y\in\bn{S},y^{‘}\in\overline{S\cup\bn{S}}$ such that $y>y^{’}$. By definition of bounded neighbors, there exists $v\in S, x\in\overline{S}$ such that $\br{v}{y}{x}$.
Since $S\leqs\mV$, we have $v<x$, which implies $y<x$, as otherwise $v<x<y$ and therefore $vy\notin\E$, a contradiction. We then have $y^{‘}<y<x$, and therefore $y^{’}x\notin\E$. Moreover $vy^{‘}\notin\E$, otherwise $\br{v}{y^{’}}{x}$ and therefore $y^{‘}\in\bn{S}$, a contradiction.
Then either $v<y^{’}$ or $v>y^{‘}$. Since $S\leqs\mV$, we have $v<y^{’}$. But then $v<y'<y$ and therefore $vy\notin\E$, a contradiction.
\end{IEEEproof}
\begin{lemma}\label{lbnm}
If $S\leqs\mV$, then $\bn{S}\cap\Max{\mV}=\emptyset$.
\end{lemma}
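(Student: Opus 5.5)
Suppose, for contradiction, that some $y\in\bn{S}\cap\Max{\mV}$ exists. By the definition of the bounded neighborhood, there are $v\in S$ and $x\in\overline{S}$ with $\br{v}{y}{x}$, that is, $vy\in\E$, $vx\notin\E$ and $yx\notin\E$. I will derive the ordering of $v,x,y$ from the non-edges and then contradict the maximality of $y$. The argument follows the opening steps of the proof of Lemma~\ref{lbnl}.

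First, since $vx\notin\E$, either $v<x$ or $v>x$. Because $v\in S$, $x\in\mV-S$ and $S\leqs\mV$, the lower-bound condition forces $v<x$.

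Next, since $yx\notin\E$, either $y<x$ or $y>x$. If $y>x$, then $v<x<y$ holds pointwise in $t$, so $v(t)<y(t)$ for all $t$ and the trajectories of $v$ and $y$ never meet. This contradicts $vy\in\E$. Hence $y<x$.

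Finally, $y<x$ means $x(t)>y(t)$ for all $t$, so $y$ is dominated by an element of $\mV$ and therefore $y\notin\Max{\mV}$. This contradicts the choice of $y$, and we conclude $\bn{S}\cap\Max{\mV}=\emptyset$.

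There is no substantial obstacle. The only point to check is that the order relation $>$ is transitive, since it is defined pointwise for all $t$, and that a non-edge implies comparability. The latter holds because continuous trajectories that never intersect must keep a fixed order by the intermediate value theorem.
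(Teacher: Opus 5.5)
Your proof is correct and follows essentially the same argument as the paper: you use the lower-bound property to get $v<x$, then rule out $y>x$ because $v<x<y$ would preclude the edge $vy$, and conclude that $y<x$ contradicts $y\in\Max{\mV}$. Your added remark, that a non-edge forces comparability by the intermediate value theorem, fills in a step the paper uses without comment.
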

\begin{IEEEproof}
%The proof is similar to that of Lemma~\ref{lbnl}.
Suppose it is not true, then there exists $y\in\bn{S}\cap\Max{\mV}$. By definition of bounded neighbors, there exists $v\in S, x\in\overline{S}$ such that $\br{v}{y}{x}$.
Since $S\leqs\mV$, we have $v<x$, which implies $y<x$, otherwise $v<x<y$ and therefore $vy\notin\E$, a contradiction.
But $y<x$ contradict to $y\in\Max{\mV}$.
\end{IEEEproof}

\begin{lemma}\label{us}
Let $S\subset\mV$ with $C_{1},C_{2},\hdots$ being connected components in the complement graph of $G$, $\overline{G(\overline{S})}$, restricted to $\overline{S}$. Then $\bn{S}=\emptyset$ iff the $C_{i}$'s are modules.
\end{lemma}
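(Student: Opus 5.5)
The plan is to unfold the definition of $\bn{S}$ and read it in the complement graph. The key observation is that the condition $vw\notin\E$ in $\br{u}{v}{w}$, with $v,w\in\overline{S}$, says exactly that $v$ and $w$ are adjacent in $\overline{G(\overline{S})}$. Hence $v$ and $w$ lie in the same component $C_i$. So a bounded neighbor is witnessed by a vertex $u\in S$ that distinguishes two complement-adjacent vertices of some $C_i$. Both directions reduce to this remark together with a connectivity argument; no trajectory-level reasoning (and none of Lemmas~\ref{cliq}--\ref{lbnm}) should be needed.

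($\Leftarrow$) Assume every $C_i$ is a module and suppose, for contradiction, that some $v\in\bn{S}$. Then there are $u\in S$ and $w\in\overline{S}$ with $uv\in\E$, $uw\notin\E$ and $vw\notin\E$. Since $vw\notin\E$, the vertices $v$ and $w$ lie in a common component $C_i$. But $u\in S$ lies outside $C_i$ and is adjacent to $v$ but not to $w$. This contradicts $C_i$ being a module, so $\bn{S}=\emptyset$.

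($\Rightarrow$) Assume $\bn{S}=\emptyset$ and fix a component $C_i$ and a vertex $x\notin C_i$. We must show that $x$ is adjacent in $G$ to all or none of $C_i$. There are two cases.
\begin{enumerate}
\item If $x\in\overline{S}$, then $x$ lies in another component $C_j$. There is then no complement edge between $x$ and $C_i$, so $x$ is $G$-adjacent to every vertex of $C_i$.
\item If $x\in S$, take any complement edge $vw$ inside $C_i$, that is, $v,w\in\overline{S}$ with $vw\notin\E$. If exactly one of $xv,xw$ lay in $\E$, say $xv\in\E$ and $xw\notin\E$, then $\br{x}{v}{w}$ would hold, giving $v\in\bn{S}$, a contradiction. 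Hence $x$ is adjacent to both or to neither endpoint of every complement edge in $C_i$. Since $C_i$ is connected in $\overline{G(\overline{S})}$, propagating along a spanning tree of $C_i$ shows that $x$ is adjacent to all of $C_i$ or to none of it.
\end{enumerate}

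The only real subtlety, which I expect to be the main obstacle, is the size requirement $|S|>1$ in the definition of a module. A singleton component $C_i$ trivially satisfies the ``all or none'' condition but is not formally a module. I will handle this by stating that singleton components are regarded as trivial modules in this lemma, which is consistent with the standing assumption that $G$ has no nontrivial modules. I will also check that the symmetric roles of $v$ and $w$ in the definition of $\br{u}{v}{w}$ make the ``exactly one of'' step valid in both orientations.
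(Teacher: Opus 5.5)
Your proof is correct and follows the paper's argument for the forward direction. If $N^{+}(S)=\emptyset$, each $x\in S$ must be adjacent to both or neither endpoint of every complement edge in $\overline{S}$. Connectivity of $C_i$ in $\overline{G(\overline{S})}$ then spreads this to all of $C_i$, and vertices of the other components are $G$-adjacent to all of $C_i$.

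The paper only argues this forward direction. Your converse (a witness $u-v\sim w$ puts $v,w$ in the same component and lets $u$ separate them) and your remark about singleton components fill in steps the paper leaves implicit.
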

\begin{IEEEproof}
$\bn{S}=\emptyset$ implies that for all $v\in S$, and $x,y\in\overline{S}$ with $xy\notin\E$, each $v\in S$ either connect to both $x,y$ or none.
Consequently, for all $C_{i}$, each $v\in S$ either connects to the whole $C_{i}$ or none. Combined with the fact that $C_{i}$ does not connect to any $C_{j},j\neq i$, we have that $C_{i}$ is a module.
\end{IEEEproof}
\begin{lemma}\label{mlemma}
We have $S\leqs\mV$ implying $\W^{*}(S)\subseteq\Max{\mV}$ and $S\geqs\mV$ implying $\W^{*}(S)\subseteq\Min{\mV}$
\end{lemma}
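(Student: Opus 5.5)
The plan is to run the bounded expansion to its end and show that the unreached set is a clique. Every unreached vertex will then be non-adjacent only to reached vertices, and the lower-bound property will force all those reached vertices below it. I treat the case $S\leqs\mV$; the case $S\geqs\mV$ is symmetric, with upper bounds, $<$ and $>$ exchanged, and $\Min{\mV}$ in place of $\Max{\mV}$.

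\emph{Step 1 (the lower bound survives the expansion).} Applying Lemma~\ref{lbnl} inductively along $\W_{0}(S)\subset\W_{1}(S)\subset\hdots\subset\W_{k}(S)$ gives $\W_{i}(S)\leqs\mV$ for every $i$. In particular $\W_{k}(S)\leqs\mV$. Write $R:=\W^{*}(S)=\overline{\W_{k}(S)}$. If $R=\emptyset$ there is nothing to prove, so assume $R\neq\emptyset$.

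\emph{Step 2 ($R$ is a clique).} By the stopping rule of the expansion, $\bn{\W_{k}(S)}=\emptyset$. Lemma~\ref{us}, applied with $\W_{k}(S)$ in place of $S$, then says that each connected component $C_{i}$ of the complement of $G(R)$ is a module whenever $|C_{i}|>1$. By the standing assumption that $G$ has no modules, every $C_{i}$ must be a single vertex. So the complement of $G(R)$ has no edges, i.e. $R$ is a clique in $G$.

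\emph{Step 3 (conclusion).} Suppose, for contradiction, that some $u\in R$ is not in $\Max{\mV}$. Then there is a $v$ with $v>u$, and hence $uv\notin\E$. Since $R$ is a clique, $v\notin R$, so $v\in\W_{k}(S)$. Now $v\in\W_{k}(S)$, $u\notin\W_{k}(S)$ and $uv\notin\E$, so the property $\W_{k}(S)\leqs\mV$ forces $v<u$. This contradicts $v>u$. Hence $R\subseteq\Max{\mV}$.

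The step I expect to need the most care is Step 2. One has to make sure that Lemma~\ref{us} applies to the final set $\W_{k}(S)$ and not only to $S$. One also has to check that the "no module" reduction indeed excludes every component of size greater than one; the definition of a module already requires $|S|>1$, so singleton components are harmless. For the symmetric case I will also note that the analogue of Lemma~\ref{lbnl} for upper bounds holds by the identical argument with inequalities reversed.
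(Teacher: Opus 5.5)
Your proof is correct and is essentially the paper's argument. Both iterate Lemma~\ref{lbnl} so that the reached set $\W_{k}(S)$ is a lower bound, and both use Lemma~\ref{us} with the no-module assumption to rule out two non-adjacent unreached vertices. The only difference is the order of the steps. You first show that $\W^{*}(S)$ is a clique and then take $v>u$ directly from the definition of $\Max{\mV}$. The paper instead picks $v\in\Max{\mV}$ non-adjacent to $u$, using maximality of the clique $\Max{\mV}$ (Lemma~\ref{cliq}), and splits on whether $v$ is reached. Your version is therefore marginally more economical. Like the paper's, it tacitly needs $S\neq\emptyset$, so that the components of the complement of $G(\W^{*}(S))$ are proper subsets of $\mV$ and hence genuine modules.
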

\begin{IEEEproof}
We prove the case of $S\leqs\mV$. Suppose the statement is false, and let $u\in\W^{*}(S)-\Max{\mV}\neq\emptyset$.
By Lemma~\ref{mmmc}, $\Max{\mV}$ is a maximal clique. Therefore, there exists $v\in\Max{\mV}$ with $uv\notin\E$. Otherwise, $\Max{\mV}\cup\{u\}$ is a bigger clique.
Suppose $v\in\overline{\W^{*}(S)}$. Then by Lemma~\ref{lbnl}, $\overline{\W^{*}(S)}\leqs\mV$, which implies $v<u$, a contradiction to $v\in\Max{\mV}$. Therefore, $v\in\W^{*}(S)$ and $u,v\in\W^{*}(S)$ with $uv\notin\E$.
Let $C_{1},C_{2},\hdots$ be connected components in $\overline{G(\overline{S})}$. Then $u,v$ belong to the same component $C_{i}$. By Lemma~\ref{us}, all components $C_{1},C_{2},\hdots$ are modules. Therefore, $C_{i}$ is a module of size $>1$, which contradicts our assumption that $G$ does not have a module with size $>1$.
\end{IEEEproof}
\begin{corollary}\label{c1}
We have $S\leqs\mV$ with $S\cap\Max{\mV}=\emptyset$ implying $\W^{*}(S)=\Max{\mV}$. 
\end{corollary}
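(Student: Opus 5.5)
Since Lemma~\ref{mlemma} already gives $\W^{*}(S)\subseteq\Max{\mV}$ whenever $S\leqs\mV$, it only remains to prove the reverse inclusion $\Max{\mV}\subseteq\W^{*}(S)$ under the extra hypothesis $S\cap\Max{\mV}=\emptyset$. Equivalently, no vertex of $\Max{\mV}$ is ever added during the bounded expansion $\W_{0}(S)\subset\W_{1}(S)\subset\hdots\subset\W_{k}(S)$. Once that holds, $\Max{\mV}\subseteq\overline{\W_{k}(S)}=\W^{*}(S)$, and combining with Lemma~\ref{mlemma} gives equality.

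The argument is an induction on $i$, carrying two invariants:
\begin{align*}
\W_{i}(S)\leqs\mV \quad\text{and}\quad \W_{i}(S)\cap\Max{\mV}=\emptyset .
\end{align*}
For $i=0$ both hold by hypothesis. For the step, assume both hold for $\W_{i-1}(S)$.
\begin{enumerate}
\item Lemma~\ref{lbnl} applied to $\W_{i-1}(S)$ gives $\W_{i}(S)=\W_{i-1}(S)\cup\bn{\W_{i-1}(S)}\leqs\mV$.
\item Lemma~\ref{lbnm} applied to $\W_{i-1}(S)$ gives $\bn{\W_{i-1}(S)}\cap\Max{\mV}=\emptyset$.
\item Together with the inductive hypothesis, this yields $\W_{i}(S)\cap\Max{\mV}=\emptyset$.
\end{enumerate}
The expansion terminates because the sets strictly increase inside the finite set $\mV$. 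The final set $\W_{k}(S)$ is therefore disjoint from $\Max{\mV}$, which is the desired inclusion.

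I expect no real obstacle. The only point needing care is that Lemma~\ref{lbnm} requires its input set to be a lower bound at every stage, which is exactly why the lower-bound invariant from Lemma~\ref{lbnl} is carried through the induction. One should also note that the standing assumptions (no modules, and $\Max{\mV}\cap\Min{\mV}=\emptyset$) are still in force, so Lemma~\ref{mlemma} applies.
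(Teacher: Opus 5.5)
Your proposal is correct and matches the paper's proof. The paper likewise invokes Lemma~\ref{lbnl} and Lemma~\ref{lbnm} to conclude that $\overline{\W^{*}(S)}=\W_{k}(S)$ is a lower bound disjoint from $\Max{\mV}$, which gives $\Max{\mV}\subseteq\W^{*}(S)$, and then uses Lemma~\ref{mlemma} for the reverse inclusion; the only difference is that you spell out the induction over the expansion steps $\W_{i}(S)$, which the paper leaves implicit.
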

\begin{IEEEproof}
By Lemma~\ref{lbnl} and Lemma~\ref{lbnm}, $\overline{\W^{*}(S)}\leqs\mV$ and $\overline{\W^{*}(S)}\cap\Max{\mV}=\emptyset$. Therefore $\Max{\mV}\subseteq\W^{*}(S)$.
By Lemma~\ref{mlemma}, $\W^{*}(S)\subseteq\Max{\mV}$. Therefore $\W^{*}(S)=\Max{\mV}$.
\end{IEEEproof}

\begin{corollary}\label{c2}
If $S\geqs\mV$, then $\W^{*}(S)\leqs\overline{S}$ and $\W^{*}(S)\cap\Max{\overline{S}}=\emptyset$.
\end{corollary}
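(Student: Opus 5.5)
The plan is to mirror, for upper bounds, the two structural facts already proved for lower bounds, and then combine them with Lemma~\ref{mlemma} and the standing assumption $\Max{S'}\cap\Min{S'}=\emptyset$ for every $S'\subseteq\mV$.

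\textbf{Step 1: the upper-bound analogue of Lemma~\ref{lbnl}.} First I will record that if $S\geqs\mV$, then $S\cup\bn{S}\geqs\mV$. The proof of Lemma~\ref{lbnl} goes through verbatim with every inequality reversed. Equivalently, replace every trajectory $\mv_i(t)$ by $-\mv_i(t)$: this leaves $G$ and $\bn{\cdot}$ unchanged and exchanges $\leqs$ with $\geqs$. Iterating along the bounded expansion gives $\W_k(S)\geqs\mV$, where $\W_k(S)=\overline{\W^{*}(S)}$.

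\textbf{Step 2: the first claim, $\W^{*}(S)\leqs\overline{S}$.} Since $S=\W_0(S)\subseteq\W_k(S)$, we have $\W^{*}(S)\subseteq\overline{S}$. Take $x\in\W^{*}(S)$ and $y\in\overline{S}-\W^{*}(S)$ with $xy\notin\E$; I must show $x<y$. Here $y\in\W_k(S)$ and $x\notin\W_k(S)$. Because $\W_k(S)\geqs\mV$ by Step~1, the non-edge $xy$ forces $y>x$, which is what is needed.

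\textbf{Step 3: the second claim, $\W^{*}(S)\cap\Max{\overline{S}}=\emptyset$.} By the second half of Lemma~\ref{mlemma}, $\W^{*}(S)\subseteq\Min{\mV}$. Suppose some $u$ lies in $\W^{*}(S)\cap\Max{\overline{S}}$. Then $u\in\Min{\mV}\cap\overline{S}$. Since $\overline{S}\subseteq\mV$, no element of $\overline{S}$ lies below $u$, so $u\in\Min{\overline{S}}$. Hence $u\in\Max{\overline{S}}\cap\Min{\overline{S}}$, which contradicts the standing assumption $\Max{S'}\cap\Min{S'}=\emptyset$ applied with $S'=\overline{S}$.

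\textbf{Main obstacle.} The only delicate point is Step~1: checking that the reversed version of Lemma~\ref{lbnl} is legitimate. I would justify it by the reflection $\mv_i\mapsto-\mv_i$, which preserves all collisions and hence $\bn{\cdot}$, and swaps the roles of $\Max{}$ and $\Min{}$. The second half of Lemma~\ref{mlemma} is already stated in the paper, so no extra work is needed there. A minor point to watch is that the no-module assumption is used inside Lemma~\ref{mlemma} for the whole graph, which matches how it is used here.
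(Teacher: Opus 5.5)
Your proof is correct. Your Step~3 is exactly the paper's argument for the second claim, but your Step~2 reaches the first claim by a different route. The paper obtains both claims from a single containment, $\W^{*}(S)\subseteq\Min{\mV}\cap\overline{S}\subseteq\Min{\overline{S}}$, supplied by Lemma~\ref{mlemma}. It then reads off $\W^{*}(S)\leqs\overline{S}$ because every element of $\Min{\overline{S}}$ lies below each of its non-neighbours in $\overline{S}$. The paper leaves that step implicit when it passes from $\Min{\overline{S}}\leqs\overline{S}$ to the subset $\W^{*}(S)$. You instead check the definition directly from $\W_k(S)=\overline{\W^{*}(S)}\geqs\mV$, which is the reflected and iterated form of Lemma~\ref{lbnl}. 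This gives a little extra generality: your first claim holds for every upper bound $S$ without the no-module hypothesis, whereas the paper's derivation goes through Lemma~\ref{mlemma} and hence through that hypothesis. The cost is one extra ingredient, but the paper already uses it tacitly, since the $S\geqs\mV$ half of Lemma~\ref{mlemma} rests on the same reflection symmetry you make explicit. Your proof as a whole still needs the no-module assumption, through Step~3.
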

\begin{IEEEproof}
By Lemma~\ref{mlemma}, $\W^{*}(S)\subseteq\Min{\mV}$. Since $\W^{*}(S)\subseteq\overline{S}$, we have $\W^{*}(S)\subseteq\Min{\mV}\cap\overline{S}\subseteq\Min{\overline{S}}\leqs\overline{S}$.
Since we assumed  $\Min{\overline{S}}\cap\Max{\overline{S}}=\emptyset$, we have $\W^{*}(S)\cap\Max{\overline{S}}=\emptyset$.
\end{IEEEproof}

\begin{theorem}
    Given any lower bound $S\leqs\mV$ with $S\cap\Max{\mV}=\emptyset$, Algorithm~\ref{alg:layers_lb} recovers the layers $\lp\mV_{1},\mV_{2},\hdots,\mV_{k}\rp$.
\end{theorem}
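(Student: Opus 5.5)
We have not seen Algorithm~\ref{alg:layers_lb}. From Corollaries~\ref{c1} and~\ref{c2}, we take it to do the following. Start from the given lower bound $S$. Run the bounded expansion from $S$ and output $\W^{*}(S)$ as the first layer. Remove that layer from the vertex set. Use the removed layer, which is an upper bound of the remaining set, to produce a new lower bound for the remaining vertices. Repeat until nothing is left. The proof is an induction on the number of layers, with invariants that let Corollaries~\ref{c1} and~\ref{c2} be applied at every round.

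\emph{Base step.} Since $S\leqs\mV$ and $S\cap\Max{\mV}=\emptyset$, Corollary~\ref{c1} gives $\W^{*}(S)=\Max{\mV}=\mV_{1}$. So the first output is correct.

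\emph{Producing the next lower bound.} Let $R=\mV-\mV_{1}$. I would show that $\mV_{1}\geqs\mV$. This holds because for $u\in\mV_{1}$ and $x\notin\mV_{1}$ with $ux\notin\E$, comparability forces $u>x$ or $u<x$. The case $u<x$ is impossible, since it would mean some element dominates a maximal element.

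The expansion cannot be run inside $R$ itself, because its seed must be a proper subset of the ground set it lives in. So I would run it in $G$ seeded at $\mV_{1}$. Corollary~\ref{c2}, applied with the upper bound $\mV_{1}$, then gives $S':=\W^{*}(\mV_{1})$ with two properties:
\begin{itemize}
\item $S'\leqs\overline{\mV_{1}}=R$;
\item $S'\cap\Max{R}=\emptyset$, i.e.\ $S'\cap\mV_{2}=\emptyset$.
\end{itemize}
These are exactly the hypotheses of the theorem, now for the induced graph $G(R)$ whose layers are $(\mV_{2},\dots,\mV_{k})$. Applying Corollary~\ref{c1} inside $G(R)$ then yields $\mV_{2}$. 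Iterating, in round $i$ the removed set $\mV_{1}\cup\dots\cup\mV_{i-1}$ plays the role of the upper bound and $\mV-\bigcup_{j<i}\mV_{j}$ is the ground set. Termination is immediate because each round removes a nonempty layer.

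\emph{Main obstacle.} The hard part is the standing hypotheses that Corollaries~\ref{c1} and~\ref{c2} rely on, in each induced subgraph $G(R)$ rather than only in $G$:
\begin{itemize}
\item no modules (used in Lemma~\ref{mlemma});
\item $\Max{R}\cap\Min{R}=\emptyset$.
\end{itemize}
The second is dispatched as the paper does: vertices adjacent to all of $R$ are identified via Corollary~\ref{mmmc}, peeled off, and assigned to their layer directly. The first is harder, because an induced subgraph of a module-free graph can contain modules. I would try two routes:
\begin{itemize}
\item \emph{Replay the argument in $G(R)$.} Repeat the proof of Lemma~\ref{mlemma} there, observing that the complement component $C_i$ it produces lies within $\mV_{2}$-candidates. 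Then either $C_i$ is a module of $G$, contradicting the assumption, or any vertex of $\mV_{1}$ distinguishing $C_i$ can be used to show directly that $C_i\subseteq\Max{R}$. In the second case the inclusion $\W^{*}\subseteq\Max{R}$ still holds, which is all the proof needs.
\item \emph{Adopt a convention.} If that fails, I would follow the paper's convention: apply module contraction recursively to each $G(R)$ before each round. Layers are preserved under shrinking modules, as the shrinking proposition shows, so the recovered layers of the contracted graph lift back to the original.
\end{itemize}
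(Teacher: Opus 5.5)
Your main line is exactly the paper's proof. Corollary~\ref{c1} yields $V_1$. Next, $V_1$ is an upper bound; your check of this is correct, and the paper leaves it implicit. Corollary~\ref{c2}, with the expansion seeded at $V_1$ in the current graph (the algorithm's step $S\gets W^*_{G(U)}(V_i)$), then gives a lower bound of $V-V_1$ disjoint from $V_2$, and one iterates. The paper stops there and silently applies Corollaries~\ref{c1} and~\ref{c2} inside each $G(U)$.

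You correctly isolate this as the crux, since Lemma~\ref{mlemma} is where module-freeness is used. But it is a genuine gap, shared by the paper, that your proposal does not close. In fact the step is false if only $G$ is module-free. Take eight trajectories whose dominance order is the transitive closure of $a<u$, $b<u$, $u<v$, $u<p$, $v<s$, $a<c<s$, $a<q$, with all incomparable pairs colliding (any poset is realizable this way). The collision graph then has edges $ab,bc,bq,cu,cv,cp,cq,uq,vp,vq,ps,pq,sq$. Taking the closure of each of the $28$ vertex pairs under splitting vertices shows that $G$ is prime. Moreover, neither $G$ nor $G(U)$ below has a vertex adjacent to all others.

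Run the algorithm from $S=\{a,b\}$. It correctly returns $V_1=\{p,q,s\}$ and then $S=W^*_G(V_1)=\{a,b\}$. In $G(U)$ with $U=\{a,b,c,u,v\}$, the only edge leaving $\{a,b\}$ is $bc$, and $c$ is adjacent to both $u$ and $v$. Hence $N^{+}(\{a,b\})=\emptyset$, and the second output is $\{u,v,c\}$. This set is not even a clique, whereas the true $V_2$ is $\{v,c\}$. The culprit is $\{u,v\}$, which is a module of $G(U)$ and in $G$ is distinguished only by $p\in V_1$.

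Neither of your routes repairs this. For Route~1, suppose $w\in V_1$ distinguishes a complement-connected component $C$. Connectivity through non-edges gives non-adjacent $x,y\in C$ with $wx\in E$ and $wy\notin E$. Then $w>y$, and $x<y$ would force $x<w$, so $x>y$. The distinguishing vertex therefore certifies $C\not\subseteq R_{\max}$ rather than ruling it out.

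Route~2 relies on the claim that shrinking a module preserves layers, and this is false. The shrinking proposition preserves the collision graph, not the relation $>$. For example, take three pairwise non-colliding curves $\alpha<\beta<\gamma$; shrinking the module $\{\alpha,\gamma\}$ toward $\alpha$ puts $\gamma$ below $\beta$. In the example above, contracting $\{u,v\}$ merely returns the merged layer $\{c,\{u,v\}\}$. Splitting it requires deciding $u<v$ from $p$, which lies outside $G(U)$.

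What actually makes your argument, and the paper's, valid is an explicit extra hypothesis: every $G(U)$ arising in the algorithm is module-free and has no vertex adjacent to all of $U$. With that hypothesis stated, your proof is complete and coincides with the paper's.
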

\begin{IEEEproof}
    By Corollary~\ref{c1}, we can recover $\mV_{1}$ from $S$. By Corollary~\ref{c2}, we can recover a new lower bound $S'\leqs \lp\mV-\mV_{1}\rp$ with $S'\cap\mV_{2}=\emptyset$ from $\mV_{1}$. The process proceeds until all layers found.
\end{IEEEproof}
\begin{algorithm}
%\small
\caption{Recover layers given a lower bound}
\label{alg:layers_lb}
\begin{algorithmic}[1]
\STATE \textbf{Input:} A lower bound $S\leqs\mV$ with $S\cap\Max{\mV}=\emptyset$.
\STATE \textbf{Output:} Layers $\lp\mV_{1},\mV_{2},\hdots,\mV_{k}\rp$.
\STATE\COMMENT {Note: $W^*_{G(U)}(A)$ denotes $W^*(A)$ computed on the induced graph $G(U)$.}
\STATE $U\gets V$
\FOR{$i=1,2,3,\hdots$ with $U\neq\emptyset$}
\STATE $V_i\gets W^*_{G(U)}(S)$
\STATE $S\gets W^*_{G(U)}(V_i)$
\STATE $U\gets U\setminus V_i$
%\STATE $\mV_{i} \gets \W^{*}(S)$
%\STATE $S \gets \W^{*}(\mV_{i})$
%\STATE $G \gets G(\mV-\mV_{i})$
\ENDFOR
\end{algorithmic}
\end{algorithm}

In Algorithm~\ref{alg:layers_lb}, we assume a lower bound $S\leqs\mV$ is given. In the rest of the section, we aim to find a lower bound given a function graph $G$.
\begin{lemma}\label{solid}
Let $S_1, S_2 \subseteq V$ be two connected sets such that no edge exists 
between $S_1$ and $S_2$. Then, either the whole $S_{1}$ is ``at the left'' of $S_{2}$ or vise versa, where ``at the left'' means that for all $v_{i}\in S_{1},v_{j}\in S_{2}$, $\mv_{i}\leqs\mv_{j}$.
\end{lemma}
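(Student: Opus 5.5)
Here "connected" is meant in $G$: the induced subgraphs $G(S_1)$ and $G(S_2)$ are connected. The ordering $\mv_i\leqs\mv_j$ is read in the paper's sense for non-adjacent pairs: since $S_1$ and $S_2$ have no edges between them, every $\mv_i\in S_1$, $\mv_j\in S_2$ satisfies $\mv_iv_j\notin\E$. The trajectories therefore never meet, and by continuity (intermediate value theorem) exactly one of $\mv_i<\mv_j$ or $\mv_i>\mv_j$ holds. The claim is that this comparison does not depend on the choice of the pair $(\mv_i,\mv_j)$.

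\textbf{Step 1: a single vertex against a connected set.} I would first fix a single $\mv_j\in S_2$ and show that all of $S_1$ lies on one side of $\mv_j$. It suffices to check this along each edge $\mv_a\mv_b\in\E$ with $\mv_a,\mv_b\in S_1$, since connectivity of $G(S_1)$ then propagates it to all of $S_1$. Since $\mv_a$ and $\mv_b$ collide, there is a time $t_0$ with $\mv_a(t_0)=\mv_b(t_0)$. Suppose, for contradiction, that $\mv_a<\mv_j<\mv_b$. Then $\mv_a(t_0)<\mv_j(t_0)<\mv_b(t_0)=\mv_a(t_0)$, which is impossible. So $\mv_a$ and $\mv_b$ lie on the same side of $\mv_j$.

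\textbf{Step 2: symmetric argument.} Swapping the roles of the two sets, each fixed $\mv_i\in S_1$ lies on one side of all of $S_2$, again by the same edge-by-edge argument combined with connectivity of $G(S_2)$.

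\textbf{Step 3: combine.} Pick any reference pair $\mv_{i_0}\in S_1$, $\mv_{j_0}\in S_2$, and say $\mv_{i_0}<\mv_{j_0}$. For arbitrary $\mv_i\in S_1$ and $\mv_j\in S_2$, Step 1 applied to $\mv_{j_0}$ gives $\mv_i<\mv_{j_0}$. Then Step 2 applied to $\mv_i$ gives $\mv_i<\mv_j$. Hence $S_1$ lies entirely at the left of $S_2$; the case $\mv_{i_0}>\mv_{j_0}$ is symmetric.

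The argument is elementary. The only point needing care is Step 1: the contradiction must use the collision time of the edge $\mv_a\mv_b$ together with the strict global inequalities, and the dichotomy $<$ or $>$ for non-adjacent pairs must rest on continuity. I do not expect a real obstacle.
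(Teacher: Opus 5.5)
Your proof is correct and follows the same idea as the paper's. The paper reduces the claim to ruling out a vertex of $S_1$ sandwiched between two vertices of $S_2$, and then uses connectivity of $S_2$ to force a collision with that vertex; this is exactly your edge-straddling contradiction in Steps 1--2. Your version is somewhat more complete, since it also handles the symmetric sandwich case and the combination step (your Step 3), both of which the paper leaves implicit in its ``it suffices''.
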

\begin{IEEEproof}
It suffices to show there do not exist $\mv_{1}\in S_{1}$, $\mv_{2},\mv_{2}\in S_{2}$ such that $\mv_{2}<\mv_{1}<\mv_{3}$. Since $S_{2}$ is connected, there exists $\mv_{4}\in S_{2}$ with $\mv_{1}\mv_{4}\in\E$, a contradiction to our assumption that $S_{1},S_{2}$ are independent.
\end{IEEEproof}
\begin{theorem}\label{clibnd}
Let $S\subseteq \mV$ be a maximal clique, $N(S)\subseteq\overline{S}$ the set of neighbors of $S$, and $\{C_{i}\}$ the connected components in $\mV-S-N(S)$. Let $L,R$ be the union of vertices in $C_{i}$'s that are at left and right of $S$, then $L,R\leqs\mV$.
Algorithm~\ref{alg:lb} produces $L,R$.
\end{theorem}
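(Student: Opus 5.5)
The plan is to show first that $L$ and $R$ are well defined, then verify the lower-bound property for $L$ by a case analysis on where the non-neighbour lies, and finally explain how Algorithm~\ref{alg:lb} decides sides.

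\textbf{Step 1: $L$ and $R$ are well defined.} The set $S$ is a clique, hence connected. Each $C_i$ is connected by definition. No $C_i$ has an edge to $S$, because $C_i\cap N(S)=\emptyset$. Lemma~\ref{solid} therefore applies to the pair $(C_i,S)$, so each $C_i$ lies entirely to the left of $S$ or entirely to the right. Hence $L\cup R=\mV-S-N(S)$ is a disjoint union. For every $u\in L$ and $s\in S$ we have $u<s$, since $us\notin\E$ and $u\leqs s$. Symmetrically, $r>s$ for all $r\in R$, $s\in S$. By the reversal symmetry $\{\mv_i(t)\}\mapsto\{-\mv_i(t)\}$, it suffices to prove $L\leqs\mV$.

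\textbf{Step 2: the lower-bound property for $L$.} Take $u\in L$ and $w\in\mV-L$ with $uw\notin\E$. Then $u<w$ or $u>w$, and we must exclude $u>w$. I would split on the location of $w$.
\begin{enumerate}
\item If $w\in S$, then $u<w$ by Step 1.
\item If $w\in N(S)$, suppose $w<u$. Then $w<u<s$ for every $s\in S$, so $w(t)<s(t)$ for all $t$ and $w$ has no neighbour in $S$. This contradicts $w\in N(S)$.
\item If $w\in R$, then $u<s<w$ for any $s\in S$, and transitivity of $<$ gives $u<w$.
\end{enumerate}
These cases exhaust $\mV-L$, so $L\leqs\mV$. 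The symmetric argument gives that $R$ is an upper bound.

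\textbf{Step 3: the algorithm, and the main obstacle.} Computing $N(S)$ and the components $C_i$ is routine. The real difficulty is deciding \emph{combinatorially} which components lie on the same side of $S$, since the graph contains no positional information directly.

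My approach would use the complement graph $\overline{G}$. It is a comparability graph, and the relation $<$ is one of its transitive orientations. Because we assume no modules of size $>1$, Gallai's theory~\cite{gallai1967transitiv} says this transitive orientation is unique up to reversal, and it can be computed in polynomial time~\cite{golumbic1983comparability}.

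I would compute any transitive orientation and fix some $s\in S$. Every vertex of every $C_i$ is non-adjacent to $s$, so each such vertex has an orientation relative to $s$, and Step 1 shows this orientation is constant on each $C_i$. Let $L$ be the union of the components oriented towards $s$, and $R$ the union of those oriented away from it; the uniqueness of the orientation makes this split correct up to the global reversal.

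The step I expect to need the most care is justifying this uniqueness. One must confirm that the no-module assumption (together with $\Max{S}\cap\Min{S}=\emptyset$) really is what Gallai's uniqueness theorem requires. If that fails, the fallback is to argue forcing directly: edges $cs$ and $c's$ of $\overline{G}$ with $c\in C_i$, $c'\in C_j$ are linked by Gallai's $\Gamma$-forcing chain exactly when $C_i$ and $C_j$ lie on opposite sides of $S$.
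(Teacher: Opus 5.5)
\textbf{Verdict: the first half is proved, the second is not.} Your Steps 1 and 2 are correct and more complete than the paper, which says only that $L,R\le V$ ``follows from their definition''; your three-case check ($w\in S$, $w\in N(S)$, $w\in R$) is exactly what that phrase hides.

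The gap is the theorem's second assertion. Algorithm~\ref{alg:lb} is a specific procedure:
\begin{enumerate}
\item it builds an auxiliary graph $Q$ on the components $C_i$;
\item it joins $C_i$ and $C_j$ whenever a single $v\in N(S)$ is adjacent to vertices of both;
\item it returns the union of one connected component of $Q$.
\end{enumerate}
Your Step 3 never analyses this procedure. It replaces it with a different one based on a transitive orientation of $\overline{G}$.

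That replacement is sound in itself. Modules of $G$ and $\overline{G}$ coincide, so the no-module assumption makes $\overline{G}$ a prime comparability graph. Gallai's theorem then gives exactly two transitive orientations, namely $<$ and its reverse. But this certifies your algorithm, not the one in the statement. It would in fact make the whole lower-bound machinery superfluous, since it gives you $<$ up to reversal and hence all layers directly.

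\textbf{What is missing.} You need to show that the components of $Q$ are exactly the left group and the right group. This rests on two facts, which use the maximality of $S$ and the connectivity of $G$.

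\emph{No edge of $Q$ crosses sides.} Suppose $v\in N(S)$ is adjacent to $c\in L$ and $c'\in R$. For each $s\in S$, a non-edge $vs$ would give either $v<s<c'$ or $c<s<v$. These contradict $vc'\in E$ and $vc\in E$ respectively. So $v$ is adjacent to all of $S$, and $S\cup\{v\}$ is a larger clique, contradicting maximality.

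\emph{Components on the same side form a clique in $Q$.} By Lemma~\ref{solid}, the left components are linearly ordered; let $C^*$ be the leftmost. Since $G$ is connected (which the no-module assumption guarantees), a path from $C^*$ to $S$ must leave $C^*$ through some $v\in N(S)$. This $v$ is adjacent to some $c^*\in C^*$ and to some $s\in S$. Any vertex $c$ of another left component satisfies $c^*<c<s$. A non-edge $vc$ would give either $v<c<s$ or $c^*<c<v$, contradicting $vs\in E$ or $vc^*\in E$. So $v$ is a common neighbour of every left component, and symmetrically on the right.

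Hence $Q$ has exactly two components, $\{C_i\subseteq L\}$ and $\{C_i\subseteq R\}$ (only one if a side is empty). The set returned by Algorithm~\ref{alg:lb} is therefore $L$ or $R$, as claimed.
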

\begin{IEEEproof}
For simplicity, we assume $\mV-S-N(S)\neq\emptyset$.

By Lemma~\ref{solid}, all $C_{i}$s are either at the left or right of $S$. 
\begin{enumerate}
    \item For any $C_{i},C_{j}$ at different sides of $S$, there is no $v\in N(S)$ such that $v$ connect to both $C_{i},C_{j}$. Otherwise, $v$ should connect to all vertices in $S$ and therefore $S$ would not be a maximal clique, a contradiction.
    \item Let $C^{*}\in L$ be leftmost, which means all $C_{i}\in L$ stay between $S,C^{*}$. Since $G$ is connected, there exists $v\in N(S)$ connect to $C^{*}$. Then $v$ also connect to all $C_{i}\in L$. Consequently, all $C_{i},C_{j}$ at same side of $S$ connect to some $v\in N(S)$ simultaneously.
\end{enumerate}
By 1) and 2), we can go through all pairs $C_{i},C_{j}$ and check whether they connect to some $v\in N(S)$ simultaneously, leading us to the partition $L,R$. Also, $L,R\leqs\mV$ follows from their definition.
\end{IEEEproof}
%\vspace{-1mm}
\begin{algorithm}
%\small
\caption{Finding a lower bound}
\label{alg:lb}
\begin{algorithmic}[1]
\STATE \textbf{Input:} A function graph 
$G$.
\STATE \textbf{Output:} A lower bound.
\STATE Find any maximal clique $S$ and the connected components $\mathcal{C}=\{C_1,\ldots,C_m\}$ of $G(V\setminus(S\cup N(S)))$.
\STATE Construct an auxiliary graph $Q$ on $\mathcal{C}$, where $C_iC_j\in E(Q)$ iff some $v\in N(S)$ is adjacent to vertices in both $C_i$ and $C_j$.
\STATE Let $\mathcal{C}_1,\mathcal{C}_2$ be the connected components of $Q$.
\RETURN $\bigcup_{C\in\mathcal{C}_1}C$.
%\STATE Find a maximal clique $S$ \textcolor{blue}{ and  the connected components $\{C_{i}\}$ in $\mV-S-N(S)$.}
%\STATE Go through all pairs $C_{i},C_{j}$ and check whether they simultaneously connect to some $v\in N(S)$.
%\STATE Return either of the two sides as a lower bound.
\end{algorithmic}
\end{algorithm}
%\vspace{-3mm}
\begin{corollary}\label{twostep}
    Given a function graph $G=(\mV,\E)$, we can recover layers $\lp\mV_{1},\mV_{2},\hdots,\mV_{k}\rp$ by first using Algorithm~\ref{alg:lb} to find a lower bound and then applying Algorithm~\ref{alg:layers_lb}.
\end{corollary}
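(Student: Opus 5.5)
The plan is to compose the two earlier theorems and check that the output of Algorithm~\ref{alg:lb} satisfies the hypotheses of the theorem on Algorithm~\ref{alg:layers_lb}. That theorem needs a lower bound $S\leqs\mV$ with $S\cap\Max{\mV}=\emptyset$. We work under the standing assumptions of the section: $G$ is connected, it has no nontrivial module, and $\Max{T}\cap\Min{T}=\emptyset$ for the relevant subsets $T$.

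\textbf{Step 1: Algorithm~\ref{alg:lb} returns a lower bound.} Run Algorithm~\ref{alg:lb} with some maximal clique $S$. By Theorem~\ref{clibnd}, the components of $G(\mV\setminus(S\cup N(S)))$ split into the parts $L$ and $R$ lying left and right of $S$, with $L,R\leqs\mV$. The graph $Q$ has exactly two components, and these are precisely $L$ and $R$. So the set returned, $S_0:=\bigcup_{C\in\mathcal{C}_1}C$, is one of $L,R$, and hence $S_0\leqs\mV$.

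\textbf{Step 2: $S_0$ avoids $\Max{\mV}$.} This is the step I expect to be the main obstacle, since Theorem~\ref{clibnd} does not state it. The order is only defined up to reversal, so I would choose the orientation in which $S_0$ lies on the ``lower'' side of $S$, i.e.\ every vertex of $S_0$ is below every vertex of $S$. The symmetric case is handled by the $\geqs$ version of Lemma~\ref{mlemma}.

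Now take $x\in S_0$ and any $s\in S$. Since $S_0\cap N(S)=\emptyset$, $x$ is not adjacent to $s$, so $x$ and $s$ are comparable. Because $x$ lies on the lower side, Lemma~\ref{solid} gives $x<s$. Hence $x$ has a vertex above it, so $x\notin\Max{\mV}$, and therefore $S_0\cap\Max{\mV}=\emptyset$.

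One must also handle the degenerate case $\mV=S\cup N(S)$, which the proof of Theorem~\ref{clibnd} set aside. There I would either choose a different maximal clique, e.g.\ a layer-type clique, or fall back directly on Lemma~\ref{mmmc}.

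\textbf{Step 3: conclude.} With $S_0\leqs\mV$ and $S_0\cap\Max{\mV}=\emptyset$, the theorem on Algorithm~\ref{alg:layers_lb} applies. Corollary~\ref{c1} yields $\mV_1=\W^*(S_0)$, and Corollary~\ref{c2} supplies the next lower bound inside $\mV\setminus\mV_1$. Iterating this recovers $(\mV_1,\dots,\mV_k)$, up to the global reversal that the model already allows.

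Both steps are polynomial: a maximal clique is found greedily, and the remaining work consists of connected-component computations and bounded expansions. This gives an efficient recovery procedure, as claimed.
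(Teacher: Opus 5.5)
Your Steps 1 and 3 are the composition the paper intends: Theorem~\ref{clibnd} followed by the theorem on Algorithm~\ref{alg:layers_lb}. Your Step 2 is a correct and useful addition. The paper never checks that the returned side avoids $\Max{\mV}$, and your argument supplies exactly that: each $x\in S_0$ is non-adjacent to every vertex of the clique $S$, hence below it. (Minor: if one side is empty, $Q$ has one component rather than two, but this does not affect your argument.)

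\textbf{The gap is the degenerate case $\mV=S\cup N(S)$, and your proposed remedy does not work.} Take the triangular prism $G=\overline{C_6}$ on $a_1,a_2,a_3,b_1,b_2,b_3$. It consists of the triangles $\{a_1,a_2,a_3\}$ and $\{b_1,b_2,b_3\}$ plus the matching $a_1b_3,a_2b_1,a_3b_2$.

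\emph{It satisfies the standing assumptions.} Its complement $C_6$ is the comparability graph of the $3$-crown ($a_i<b_i$, $a_i<b_{i+1}$, indices mod $3$), so $G$ is a function graph. It is connected, prime (because $C_6$ is), and has no universal vertex.

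\emph{Every maximal clique dominates it.} The maximal cliques are the two triangles and the three matching edges. Each $b_j$ has its matching neighbor in $\{a_1,a_2,a_3\}$, and symmetrically each $a_i$ has one in $\{b_1,b_2,b_3\}$. For $S=\{a_1,b_3\}$, the other four vertices are triangle-neighbors of $a_1$ or of $b_3$, and likewise for the other matching edges.

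\emph{Consequently the procedure fails.} No maximal clique, layer-type or not, gives $\mV-S-N(S)\neq\emptyset$. Algorithm~\ref{alg:lb} produces nothing, effectively $\emptyset$. Note that $\emptyset$ formally satisfies both hypotheses you verify in Steps 1--2, so nonemptiness is exactly what is missing. Algorithm~\ref{alg:layers_lb} started from $\emptyset$ outputs $\W^{*}(\emptyset)=\mV$ as a single layer, whereas the true layers are $\{b_1,b_2,b_3\}$ and $\{a_1,a_2,a_3\}$. Falling back on Corollary~\ref{mmmc} does not help, since it only detects universal vertices and there are none here.

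This case must therefore be handled in one of two ways. One option is to exclude it by an explicit hypothesis. That is in effect what the paper does: the proof of Theorem~\ref{clibnd} simply assumes $\mV-S-N(S)\neq\emptyset$ ``for simplicity,'' so the paper's argument has the same hole. The other option is a genuinely different way of producing a nonempty lower bound. For instance, take a single minimal vertex read off from a transitive orientation of $\overline{G}$. Starting from $\{a_1\}$, the bounded expansion correctly returns $\W^{*}=\{b_1,b_2,b_3\}$.
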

%\vspace{-3mm}
\subsection{Missing Edges}\label{sec:missing}
%\vspace{-2mm}
As shown in Section~\ref{wotwoel}, even if we have no time information in the history, we can still recover layers $V_1,\dots,V_k$ which provide partial position information between the trajectories. We study next a setting with even less information, where some of the collisions are missing. 
Given any graph $G = (V,E)$ with possibly missing collision edges, we 
seek a function graph $G' = (V, E')$ with $E \subseteq E'$ that is 
consistent with all observed collisions while adding as few extra edges 
as possible. Notice that the complete graph is always a valid solution, 
but it is uninformative since every object collides with every other, 
leaving no positional structure to recover. We therefore seek the 
completion $G'$ that preserves as much positional information as 
possible.

To quantify this notion, we observe that in any function graph realization, 
the maximum degree of an object bounds how many others it must collide with 
to realize the given collisions, directly limiting how much positional 
resolution can be extracted. Minimizing the maximum degree of $G'$ thus 
corresponds to finding the least-constrained completion, and we denote 
this minimum achievable maximum degree by $B_f(G)$.
%Let $G^{*}$ be one optimal solution. 
As shown in Lemma~\ref{reduction}, this problem is closely related to the extensively studied graph bandwidth problem~\cite{diaz2002survey,petit2013addenda} that may be succinctly described as follows. Given a graph $G=(V,E)$, find a permutation $\sigma^{*}$ in $\mathbb{S}_n,$ the symmetric group of order $n!,$ that minimizes the worst-case edge stretch, defined as $B(G):=\min_{\sigma}\max_{uv\in E}|\sigma(u)-\sigma(v)|.$ The graph bandwidth problem is NP-hard for general graphs~\cite{diaz2002survey,petit2013addenda}. The best known results for general graphs with efficient placement algorithms ensure an approximation ratio of $O(\log^{3}(n)\sqrt{\log\log(n)})$~\cite{dunagan2001euclidean}. It is NP-hard to approximate the bandwidth of trees within any constant $c \in \mathbb{N},$ even for a subclass of trees called ``caterpillars,'' where every branching vertex (i.e., vertex of degree $>2$) is on the same line~\cite{feige2009approximating}. Moreover, the existence of a $c\sqrt{\frac{\log(n)}{\log\log(n)}}$-approximation, where $c$ is a constant, implies that the NP class admits quasipolynomial time algorithms. %On the other hand, efficient $O\lp\frac{\log(n)}{\log\log(n)}\rp$-approximation algorithms for computing the bandwidth of caterpillars are readily available.
\begin{lemma}\label{reduction}
    Editing $G=(V,E)$ into the lowest maximum degree function graph $G'=(V,E')$ is NP-hard.
\end{lemma}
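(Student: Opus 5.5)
The plan is to show that $B_f(G)$ and the bandwidth $B(G)$ are within constant factors of each other, namely
\[
B(G)\le B_f(G)\le 4B(G)+O(1)
\]
up to the exact constants. I would then combine this with the inapproximability of bandwidth: by~\cite{feige2009approximating} it is NP-hard to approximate $B(G)$ within any constant, even on caterpillars. An exact polynomial algorithm for $B_f$ would therefore give a constant-factor approximation of $B$, which is impossible unless P$=$NP. This is the ``4-approximation relationship'' announced in the introduction.

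\textbf{Upper bound $B_f(G)\le 4B(G)$.} Start from an optimal layout $\sigma$ with bandwidth $b=B(G)$. I would realize a function graph in which an edge $uv$ appears only if $|\sigma(u)-\sigma(v)|\le 2b$. Place object $u$ at height $\sigma(u)$ and use a ``local transposition'' schedule: take a product of permutations, each a set of transpositions between positions at distance at most $b$, so that every pair $u,v$ with $|\sigma(u)-\sigma(v)|\le b$ crosses. A single sorting-network style pass over windows of length $2b$ should suffice. For instance, reverse each block $[jb,(j+2)b)$ in turn: this makes every pair at distance at most $b$ cross at least once, while objects never travel more than about $2b$ from their home position. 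Each object then collides only with objects whose home is within $2b$ on either side, so every degree is at most $4b$. Hence $E\subseteq E'$ and $\Delta(G')\le 4b$. The crossings-only rule is fine, because crossing twice is allowed in this model.

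\textbf{Lower bound $B(G)\le B_f(G)$.} Take an optimal completion $G'$. It is a function graph with maximum degree $d=B_f(G)$. I want an ordering of $V$ in which every edge of $G'$, and hence of $G$, has stretch at most about $d$. Order the vertices by position at a fixed time, say $t=0$; call this ordering $\pi$. If $uv\in E'$ and $\pi(u)<\pi(v)$, then $u$ and $v$ must cross. Every $w$ lying strictly between them at time $0$ must then meet $u$ or $v$ before the two of them meet, by the argument of Lemma~\ref{space}: $w$ cannot stay strictly between them forever. So each of the $|\pi(u)-\pi(v)|-1$ intermediate vertices is adjacent to $u$ or $v$, giving $|\pi(u)-\pi(v)|-1\le 2d$. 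This yields $B(G)\le 2B_f(G)+1$. Combining the two bounds, $B$ and $B_f$ agree up to a factor of about $8$, which is enough, since Feige--Talwar hardness covers every constant. If I want the stated factor $4$, I would sharpen the constants on one side.

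\textbf{Main obstacle.} The hard step is the explicit construction in the upper bound. I must make sure that every pair at $\sigma$-distance at most $b$ actually crosses, while each trajectory stays within a bounded window. Sweeping overlapping block reversals needs care at block boundaries: a pair straddling two blocks must land together in some later block, and this is why I use blocks of length $2b$ shifted by $b$. I must also respect the genericity assumptions (one collision per time, all collisions transversal). Once the construction works, the reduction is polynomial, because $G$ is passed through unchanged, and NP-hardness of exact minimization follows.
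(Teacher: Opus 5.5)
Your overall plan matches the paper's: sandwich $B_f(G)$ between constant multiples of $B(G)$, then invoke constant-factor inapproximability of bandwidth. Your lower bound is also the paper's argument. Objects strictly between $u$ and $v$ at a fixed time must, by continuity, cross $u$ or $v$, which gives $B(G)\le 2B_f(G)+1$. One small correction: the $u$--$v$ collision may occur before time $0$. In that case the intermediate-value argument runs on the interval between that collision and time $0$, not ``before they meet''.

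\textbf{The gap is in your upper bound: the construction fails as written.} Sequentially reversing the overlapping position windows $[jb,(j+2)b)$ does not keep objects near home.
After reversing $[0,2b)$, the objects that started in $[0,b)$ occupy $[b,2b)$. Reversing $[b,3b)$ moves them to $[2b,3b)$, and so on. They are dragged to the far end and cross every other object.
For $b=1$ the schedule is exactly one bubble-sort pass. The object starting at position $0$ meets all $n-1$ others. Meanwhile, the objects starting at positions $1$ and $2$, which are at $\sigma$-distance $1$, never cross.
So you get neither the bound $\Delta(G')\le 4b$ nor $E\subseteq E'$.

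The missing idea is to make each local operation an excursion that returns every object to its home, so operations cannot compound into drift. The paper does this as follows:
\begin{itemize}
\item Place the objects in the order $\sigma^{*}$ and process them one at a time from left to right.
\item Object $u$ moves rightward past its farthest right $G$-neighbour $v$, which is at most $B(G)$ positions away, and then returns home. All other objects stay stationary.
\item Every object is at home whenever an excursion starts. So during its own excursion $u$ crosses only the at most $B(G)$ objects to its right, and it is crossed only by the at most $B(G)$ objects to its left during theirs.
\item Hence $E\subseteq E'$ and $\Delta(G')\le 2B(G)$. Each crossing is a single transversal collision, so the model's assumptions are respected.
\end{itemize}

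Alternatively, you can repair your window idea by reversing each window and then restoring it. Every pair at distance at most $b$ lies in a common window, and each object meets only objects within a span of $3b$ positions, so its degree is at most $3b-1$.

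With either fix your reduction goes through. The additive $+1$ in your lower bound is harmless, since $B_f(G)\ge 1$ whenever $E\neq\emptyset$.
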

\begin{IEEEproof}
    %We show $B(G)/2 \leq B_f(G) \leq 2B(G)$ via a straightforward reduction, establishing $B_f(G)$ as a 4-approximation of $B(G)$. Since bandwidth minimization is NP-hard to approximate within any constant $c$~\cite{feige2009approximating}, computing $B_f(G)$ is NP-hard as well. The details are omitted.

\noindent\textbf{Lower bound $B_f(G) \geq B(G)/2$:} Let $G^*$ be an optimal function graph realized by history $H$, and let $p_0$ be the position before the first collision. By definition of $B(G)$, there exists $uv \in E$ such that $u$ and $v$ are at least $B(G)$ apart in $p_0$. Since $u(t)$ and $v(t)$ must collide, every object between them in $p_0$ must collide with at least one of $u, v$. Hence at least one of $u, v$ has degree $\geq B(G)/2$ in $G^*$, giving $B_f(G) \geq B(G)/2$.

\noindent\textbf{Upper bound $B_f(G) \leq 2B(G)$:} Let $\sigma^*$ be an optimal ordering achieving $B(G)$. Initialize object positions according to $\sigma^*$. We then move each objects sequentially starting from the left most one. For each object $u$, let $v$ be its furthest right neighbor in $\sigma^*$; move $u$ rightward past $v$, then return it to its original position, while all other objects remain stationary. The resulting function graph $G'$ satisfies $E \subseteq E'$ and has maximum degree $\leq 2B(G)$, so $B_f(G) \leq 2B(G)$.

Together, these bounds show $B_f(G)$ is a 4-approximation of $B(G)$. Since bandwidth minimization is NP-hard to approximate within any constant $c$~\cite{feige2009approximating}, computing $B_f(G)$ is NP-hard as well.

    %We prove that $B_{f}(G)$ is a $4$-approximation of the bandwidth $B(G)$.
    %Since $G^{*}$ is a function graph, there exists history $\hg=(\mV,\E=\{\e_{i}\})$ realize it.
    %Let $\p_{0}$ be the position before the first collision $\e_{1}$, then by definition of $B(G)$, there exists $u,v\in\mV$ with $uv\in\E$ such that $|\p_{0}(u)-\p_{0}(v)|\geq B(G)$. But $u(t),v(t)$ collide at some point. Therefore all objects in between $u,v$ on $\p_{0}$ must collide to at least one of $u(t),v(t)$, which implies at least one of $u,v$ has degree $\geq B(G)/2$ in $G^{*}$. Consequently, $B_{f}(G)\geq B(G')/2$.   
    
    %On the other hand, let $\sigma^{*}$ be one vertex ordering that achieve $B(G)$. Then the following construction of trajectories from $\sigma^{*}$ result in a function graph $G'=(V,E')$ with $E\subseteq E'$ and maximum degree $\leq 2B(G)$. First, let initial positions of objects follow $\sigma^{*}$. Then we move each objects sequentially starting from the first object in $\sigma^{*}$, and then second one, ... , etc. For each object $u$, let $v$ be the last one in $\sigma^{*}$ with $uv\in E$. We first move $u$ straight to the right until passing $v$, and then move straight to the left until back to original place. During this process, all other objects remain in their position. Let $G'=(V,E')$ be the function graph resulting from this construction of trajectories, then it is not hard to see the maximum degree of $G'\leq 2B(G)$, which implies $B_{f}(G)\leq 2B(G)$.
\end{IEEEproof}
Despite the NP-hardness result, certain structured instances remain 
tractable. 
%We start with the following observation.
%\begin{proposition}Any subgraph of a function graph is still a function graph.\end{proposition}
%At a high level, we aim to edit $G$ into a function graph $G'$ in a recursive fashion. We start with some small subgraphs of $G$, edit them separately into small function graphs, and then merge them into larger ones.
As an example, consider the following simple case: Objects are divided into two groups, $V,W$, with $|V\cap W|=1$. There is no edge between $V,W$.
Moreover, a genie reveals all edges within $V,W$ and therefore $G(V),G(W)$ are function graphs without missing edges. 
Given such a graph $G=(V\cup W,E_{V}+E_{W})$ (two function graphs sharing one vertex), we aim to edit it into a function graph $G'$ with the minimum possible maximum degree.
According to Lemma~\ref{cliq}, function graphs can be shrunk into an interval graph with each vertex being a clique. 
Moreover, one can do so with the two-step framework in Corollary~\ref{twostep}.
Therefore, we are left with interleaving cliques from $G(V),G(W)$ to minimize the worst edge stretch.
For simplicity, we consider the following problem:
%minimizing the maximum number of intersections over layers (cliques).
%In the end, we target at the following
%Note that the existence of $e=vw$ is necessary. Otherwise, putting objects in $V$ at far left and $W$ at far right so that two sets of vertices never intersect is always the best solution.
%Second, instead of minimize the maximum number of intersections over objects, We treat each layer in% $V=(\mV_{1},\hdots,\mV_{k}),W=(W_{1},\hdots,W_{\ell})$ as a giant object and minimize the maximum number of intersections over layers (giant objects). It makes sense to treat each layer as one object, since they are closely connected (they are cliques according to the Lemma~\ref{cliq}).
For a sequence $S=s_{1},s_{2},\hdots,s_{k}$, an interval $I=[s_{i},s_{j}]$ on $S$ has length $j-i+1$, the number of elements between its two endpoints. Given two sequences $X=x_{0},x_{1},x_{2},\hdots,x_{k}$ and $Y=y_{0},y_{1},y_{2},\hdots,y_{\ell}$ with $x_{0}=y_{0}$ and two sets of intervals $I_{X}:=\{[x_{i_{1}},x_{j_{1}}],[x_{i_{2}},x_{j_{2}}],\hdots\},I_{Y}:=\{[y_{i_{1}},y_{j_{1}}],[y_{i_{2}},y_{j_{2}}],\hdots\}$, we wish to interleave $X,Y$ such that the length of the longest interval among $I_{X}\cup I_{Y}$, denoted by $B(X,Y)$,
is minimized.
This is equivalent to saying that for each interval $[x_{i_{1}},x_{j_{1}}]$, one can interleave at most $B(X,Y)-(x_{j_{1}}-x_{i_{1}}+1)$ elements from $Y$ between $x_{i_{1}},x_{j_{1}}$ and similarly for intervals on $Y$.

An interleaving sequence of $X,Y$ is characterized by positions of $X$ relative to $Y$ or vice versa. The position of $x_{i}$ relative to $Y$ is denoted by $\p_{x_{i}|Y}$, where $\p_{x_{i}|Y}=j$ iff $x_{i}$ is in between $y_{j},y_{j+1}$. The position of $X$ relative to $Y$ is denoted by $\p_{X|Y}:=(\p_{x_{1}|Y},\p_{x_{2}|Y},\dots,\p_{x_{k}|Y})$. 

We compute $B(X,Y)$ via binary search on a candidate value $b$, 
checking feasibility by determining 
$p_{X|Y} = (p_{x_1|Y}, \ldots, p_{x_k|Y})$ as follows.
In the \emph{forward pass}, we scan $x_1, \ldots, x_k$ from left 
to right, setting each $p_{x_i|Y}$ to the minimum value 
$\geq p_{x_{i-1}|Y}$ such that all intervals in $I_Y$ containing 
$x_i$ have length $\leq b$, yielding a lower bound $p^1_{X|Y}$.
In the \emph{backward pass}, we scan $x_k, \ldots, x_1$ from right 
to left: for each interval $[x_i, x_j] \in I_X$ whose length exceeds 
$b$ under the current assignment, we update 
$p_{x_i|Y} \leftarrow p_{x_j|Y} - (b-(j-i))$, propagating the 
constraint leftwards and yielding a refined lower bound $p^2_{X|Y}$.
We then \emph{iterate}, alternating between forward and backward 
passes, terminating when either all intervals in $I_X \cup I_Y$ 
have length $\leq b$ (concluding $B(X,Y) \leq b$), or any pass 
forces $p_{x_1|Y} > b$ (concluding $B(X,Y) > b$).
%The whole process looks like a lazy caterpillar moving on the ground. At the first step, the caterpillar tries to move as little as possible. But because there are constraints about the maximum number of bodies each ground can support, each body of the lazy caterpillar has a minimum distance it has no choice but to travel. At the second step, some body parts have to move further since each body part cannot be stretched too long. At the third step, some ground has too many body stand on it again...etc.

%\textcolor{blue}{An alternative way to motivate this problem is by saying we have weaker form of group collision where we know some objects in group $V$ once collision with some objects in group $W$. But we do not know what exactly are these vertices.}

%\textcolor{red}{I still need to filled in some details for reduction at here. But in the end, bandwidth problem will be reduced to the following simpler one.}

\section*{Future Work}
A natural generalization of the setting studied above arises when we drop the assumption that a genie reveals all edges within $V$ and $W$. In that case, the observed two function graphs $G(V)$ and $G(W)$ may themselves have missing edges, so the layer decompositions of $G(V)$ and $G(W)$ --- and hence the orderings of the resulting sequences $X$ and $Y$ --- are no longer uniquely determined. This corresponds precisely to the setting where the internal orderings of $X$ and $Y$ can be freely rearranged, subject only to $x_0 = y_0$ remaining fixed. We call this \textit{interleaving with folding}: one seeks a permutation $\sigma$ of $\{x_0, x_1, \ldots, x_k, y_1, \ldots, y_\ell\}$ with $x_0$ fixed at the leftmost position, minimizing the maximum span of any interval in $I_X \cup I_Y$. Since any order-preserving interleaving is a valid $\sigma$, the optimal fold bandwidth $B_{\mathrm{fold}}(X,Y)$ satisfies $B_{\mathrm{fold}}(X,Y) \leq B(X,Y)$. When only $X$ (or $Y$) is present, folding offers no advantage: the original ordering is already optimal. When both $X$ and $Y$ are present, however, the interaction between them is possible to make folding strictly beneficial. Consider the special case where $X$ and $Y$ are identical, with $I_X = I_Y$. The uniform order-preserving interleaving $(x_0, x_1, y_1, x_2, y_2, \ldots)$ aligns every long interval of $X$ with the corresponding long interval of $Y$, doubling the bandwidth, which is the worst case. A nontrivial folding of $X$ and $Y$ can break this alignment and potentially achieve strictly lower bandwidth than interleaving $X$ and $Y$ without folding. Characterizing $B_{\mathrm{fold}}(X,Y)$ and designing an efficient algorithm to compute it remain open problems.